\documentclass[10pt]{article}
\usepackage{calc}
\textwidth17cm
\textheight23.5cm
\topmargin15mm
\voffset-20mm
\evensidemargin -10mm
\oddsidemargin -5mm

\usepackage[centertags]{amsmath}
\usepackage{latexsym}
\usepackage{amsfonts}
\usepackage{amssymb}
\usepackage{amsthm}
\usepackage{fancyhdr}
\usepackage [dvips]{epsfig}

\rhead{\textbf{\thepage}} 
\lhead{\textsl{\leftmark}}
\lfoot{ \footnotesize{\texttt{}}}
\usepackage{newlfont}
\usepackage[latin1]{inputenc}
\usepackage[french,english]{babel}
\usepackage{graphicx}
\usepackage{t1enc}
\usepackage[matrix,arrow,frame,curve]{xy}
\theoremstyle{plain}
\newtheorem{lem}{Lemma}[section]
\newtheorem{prop}{Proposition}[section]
\newtheorem{theo}{Theorem}[section]
\newtheorem{definition}{Definition}[section]
\newtheorem{cor}{Corollary}[section]
\newtheorem{rem}{Remark}[section]

\newtheorem{ex}{Example}[section]
\theoremstyle{plain}

\title{On the Heisenberg invariance and the Elliptic Poisson tensors}

\author{ G. Ortenzi \footnote{
 Dipartimento di Matematica Pura e Applicazioni,
  Universit\`a degli Milano Bicocca
  Via R.Cozzi, 53
 20125, Milano, Italia
  E-mail address: giovanni.ortenzi@unimib.it},
  V. Rubtsov \footnote{
  Laboratoire Angevin de Recherche en Math\'ematiques
  Universit\'e D'Angers, D\'epartement de Math\'ematiques
  2, boulevard Lavoisier, 49045 Angers, France
  E-mail address: Volodya.Roubtsov@univ-angers.fr },
  S.R. Tagne Pelap\footnote{
  Laboratoire de Math\'ematiques, Universit\'e du Luxembourg
  E-mail address: serge.pelap@uni.lu}}
\begin{document}

\maketitle
 \begin{flushright}
 ITEP-TH-38/09
 \end{flushright}
\abstract{
\noindent
We study different algebraic and geometric properties of Heisenberg invariant Poisson
polynomial quadratic algebras. We show that these algebras are unimodular. The elliptic
Sklyanin-Odesskii-Feigin Poisson algebras $q_{n,k}(\mathcal E)$ are the main important example.
We classify all quadratic $H$-invariant Poisson tensors on ${\mathbb C}^n$ with $n\leq 6$ and show that
for $n\leq 5$ they coincide with the elliptic Sklyanin-Odesskii-Feigin Poisson algebras or with
their certain degenerations.

\section*{Introduction}
The elliptic or Sklyanin algebras are one of the most studied and important examples
of so-called \emph{algebras with quadratic relations}. This class of algebras possesses many
wonderful and useful properties. It is well-known (see, for example \cite{kiev} or \cite{ode3}) that
they can be defined as a quotient $T(V )/\langle R \rangle$ of the tensor algebra $T(V )$ of an
$n-$dimensional vector space $V$ over a space $R$ of quadratic relations. (In practice
$V$ is identified with sections of a degree $n$ vector bundle over a fixed elliptic curve
$\mathcal E$. It explains also the name "Elliptic algebras"). These class of Noetherian graded
associative algebras are Koszul, Cohen-Macaulay and have the Hilbert function
like a polynomial ring with $n$ variables.
The first $n = 4$ examples of such algebras were discovered by E. Sklyanin within his
studies of integrable discrete and continuous Landau-Lifshitz models by the Quantum
Inverse Scattering Method \cite{sky1,sky2}. These algebras are intensively studied \cite{odfe1,smsf,pelap1,pelap2}. Almost in the same time the elliptic algebras with
3 generators were discovered and studied by M. Artin, J. Tate, T. Stafford and their
students and collaborators among which we should mention M. Van den Bergh whose
income to studies of the Sklyanin elliptic algebras is very important \cite{ATvdB,arsc}.

The systematic studies of the Sklyanin elliptic algebras with any Gelfand-Kirillov
dimension $n\geq 3$ based upon some deformation quantization approach were proposed
by B. Feigin and A. Odesskii in their preprints and papers of 1980-90th. Here we will
quote some of them which are relevant to our aims (see the review paper \cite{ode3} and full
list of the references there).

In this paper we will focus on the important invariance property of the Sklyanin
elliptic algebras. Let us remind that if we have an $n-$dimensional vector space $V$ and
fixed a base $v_0,\ldots,v_{n-1}$ of $V$ then the \emph{Heisenberg group of level $n$ in the Schr\"{o}dinger
representation} is the subgroup $H_n\subset GL(V )$ generated by the operators
$$\sigma: v_i \to v_{i-1}; \tau : v_i \to \varepsilon_i(v_i); (\varepsilon_i)^n = 1; 0\leq i \leq n-1.$$
This group has order $n^3$ and is a central extension
$$1 \to {\mathbb U}_n \to H_n \to \mathbb Z_n \times \mathbb Z_n \to 1,$$
where $\mathbb U_n$ is the group of $n-$th roots of unity. \\
The property to be Heisenberg invariant is quite often in quantum and classical integrable systems.
Probably the first important manifestation of it was detected by A. Belavin \cite{Bel}. He has shown that the
hidden symmetry of 1 dimensional integrable models (which is usually expressed by some Yang-Baxter equation) is
contained in the Heisenberg invariance. \\
The Heisenberg group action provides the automorphisms
of the Sklyanin algebra which are compatible with the grading and defines also an
action on the "quasi-classical" limit of the Sklyanin algebras - the elliptic quadratic
Poisson structures on $\mathbb P^{n-1}$. They are identified with Poisson structures on some moduli
spaces of the degree $n$ and rank $k + 1$ vector bundles with parabolic structure (=
the flag $0\subset F\subset {\mathbb C}^{k+1}$ on the elliptic curve $\mathcal E$). We will denote these elliptic Poisson algebras by $q_{n,k}(\mathcal E)$.
The algebras $q_{n,k}(\mathcal E)$ arise in the Feigin-Odesskii "deformational"
approach and
form a subclass of \emph{polynomial Poisson structures}. We want to show in this paper
that the extension of the Heisenberg group action to more wide family of polynomial
Poisson structures guarantees also some good and useful features and among
them:  the \emph{unimodularity} property and the triviality of the modular class. This notion
was introduced by A. Weinstein (who studied it with his collaborators in \cite{ELW})
and, independently in the holomorphic context, by Brylinsky and Zuckerman \cite{BrZuk}.
The modular class is the element of the 1-st Poisson cohomology group $HP^1(M)$ of
a Poisson manifold $M$. The "Basic Theorem of Mechanics" says that this group is
isomorphic to the quotient $Can(M)/Ham(M)$, where $Can(M)\subset \Gamma(TM)$ (resp.
$Ham(M)\subseteq Can(M))$ denote the space of Poisson (resp. Hamiltonian) vector fields
on $M$. Recently, the importance of the unimodularity property was recognized by
Dolgushev \cite{dol}. He has shown that for an affine Poisson variety with trivial canonical
bundle the unimodularity of the Poisson structure is equivalent to the fact that the
deformational quantization $A_h$ of the function algebra $A$ on this variety is isomorphic
to the \emph{dualizing Van den Bergh module} of $A_h$ entering in the Van den Bergh duality
between Hochschild homology and Hochschild cohomology of $A_h$ (the isomorphism is in the sense of bimodules).
One of the
aims of the paper is to continue of studies of the polynomial Poisson algebras which
were started in \cite{odru} and, in particular, to give some classification of low-dimensional
Heisenberg-invariant \emph{quadratic} polynomial Poisson tensors. Our results complete the
computations of A. Odesskii for $n = 5$ (unpublished) and shed some light on the
geometry behind the elliptic Poisson algebras $q_{5,1}(\mathcal E)$ and $q_{5,2}(\mathcal E)$.
These algebras are $\eta\to 0$ limits of the Sklyanin algebras $Q_{5,1}(\mathcal E; \eta)$ and $Q_{5,2}(\mathcal E;\eta)$, which were described in the famous "Kiev preprint" of B. Feigin and A. Odesskii \cite{kiev}.
They have shown in particular that the algebra $Q_{5,1}(\mathcal E; \eta)$ embeds as a subalgebra into
the algebra $Q_{5,2}(\mathcal E; \eta)$ and vice versa. We will show in the subsequent paper \cite{ORT2} that these embedding homomorphisms
are the "quantum analogs" of the quadro-cubic Cremona transformations
of $\mathbb P^4$ \cite{SemRoth} and will check that they are Poisson birational morphisms which presumably
correspond to compositions of Polishchuk birational mappings between moduli spaces
of triples $Mod(E_1;E_2; f)$ of vector bundles and their morphisms on an elliptic curve
\cite{polys}.

Our paper is written with double aims: to present some easy but maybe unknown (to the best of our knowledge) fact about polynomial Heisenberg invariant Poisson structures e.g. unimodularity,  and, on the other hand, to give a classification of quadratic structures in small dimensions to complete the results of \cite{odru}.

The paper organizes as follows: The Section 1 covers some known facts about  polynomial quadratic and elliptic (Poisson) algebras. In Section 2 we remind a definition of the Heisenberg group in the Schroedinger representation and describe its action on Poisson polynomial tensors. The unimodularity of Heisenberg invariant quadratic Poisson
structures is a subject of the Section 3. We close this section with a short observation about the Calabi-Yau property
of the deformation quantization of Heisenberg-invariant Poisson polynomial structures.
Section 4 is devoted to a classification of Heisenberg-invariant quadratic Poisson structures up to $d=6$. We obtain in this way three type of solutions of the classification problem for $d=5$ and we will show in the subsequent paper \cite{ORT2} that these solutions are exhausted by two Sklyanin-Odesskii-Feigin elliptic algebras plus skew-polynomials (among them the $12$ degenerations of the Sklyanin-Odesskii-Feigin algebras which correspond to \emph{pentagonal} degenerations of the underlying family of elliptic curves in "cuspidal points").\
Conclusions and some possible directions of future work are presented in Section 5.

\section{Preliminary facts}
In this section, $K$ is a field of characteristic zero.
\subsection{Poisson algebras and Poisson manifold}
 Let $\mathcal R$ be a commutative $K$-algebra. $\mathcal R$ is a Poisson algebra if $\mathcal R$ is a Lie algebra such that the bracket is a biderivation. In other words, $\mathcal R$ is a Poisson algebra if $\mathcal R$ is endowed with a $K$-bilinear map $\{\cdot, \cdot\} : \mathcal R\times \mathcal R\longrightarrow\mathcal R$ which satisfies the following properties:\\
$\begin{array}{lr}
   \{a, bc\}=\{a, b\}c+b\{a, c\} & \mbox{(The Leibniz rule)};\\
   \{a, b\}=-\{b, a\}& \mbox{(antisymmetry)};\\
  \{a, \{b, c\}\}+\{b, \{c, a\}\}+\{c, \{a, b\}\}=0& \mbox{(The Jacobi identity)}.\\
\end{array}$\\
where $a, b, c\in\mathcal R.$\\
One can also say that $\mathcal R$ is endowed with a Poisson structure and that the bracket $\{\cdot, \cdot\}$ is called the Poisson bracket on $\mathcal R.$
An element $a\in\mathcal R$ such that $\{a,b\}=0$ for all $b\in\mathcal R$ is called a Casimir.\\
A manifold $M$ (smooth, algebraic,...) is said to be a Poisson manifold if its functions algebra $\mathcal A$ ($C^\infty(M)$, regular,...) is endowed with a Poisson bracket.\\
Let us consider some specific examples of Poisson algebras interesting for our aims. Let
$$q_1=\frac{1}{2}(x_0^2 + x_2^2)+kx_1x_3,$$
$$q_2=\frac{1}{2}(x_1^2 + x_3^2)+kx_0x_2,$$
be two elements of $\mathbb C[x_0, x_1, x_2, x_3]$ where $k\in\mathbb C.$\\
We have a Poisson structure $\pi$ on $\mathbb C^4$ or, more generally, on $\mathbb C[x_0, x_1, x_2, x_3]$  by the formula:
$$\{f, g\}_{\pi}:=\frac{df\wedge dg\wedge dq_1\wedge dq_2}{dx_0\wedge dx_1\wedge dx_2\wedge dx_3}.$$
Then the brackets between the coordinate functions are defined by (mod $4$):
\begin{eqnarray*}
\{x_i, x_{i+1}\}=k^2x_ix_{i+1}-x_{i+2}x_{i+3}, \\
\{x_i, x_{i+2}\}=k(x_{i+3}^2-x_{i+1}^2),\ i=1, 2, 3, 4.
\end{eqnarray*}
This algebra will be denoted by $q_4(\mathcal E)$ and called the Sklyanin Poisson algebra, where $\mathcal E$ represents an elliptic curve, which parameterizes the algebra (via $k$).\\
 We can also think of this curve $\mathcal E$ as a geometric interpretation of the couple $q_1=0$, $q_2=0$ embedded in $\mathcal CP^3$ (as was observed in Sklyanin's initial paper).\\
A possible generalization can be obtained considering $n-2$ polynomials $Q_i$ in $K^n$ with coordinates $x_i$, $i=0,...,n-1,$ we can define, for any polynomial $\lambda\in K[x_0,...,x_{n-1}]$, a bilinear differential operation :
$$\{\cdot ,\cdot\} : K[x_0,...,x_{n-1}]\otimes K[x_0,...,x_{n-1}]\longrightarrow K[x_0,...,x_{n-1}]$$
by the formula
\begin{equation}\label{q}
\{f,g\}=\lambda\frac{df\wedge dg\wedge dQ_1\wedge...\wedge dQ_{n-2}}{dx_0\wedge dx_1\wedge...\wedge dx_{n-1}},\quad f,g\in K \left[ x_0,...,x_{n-1} \right].
 \end{equation}
This operation gives a Poisson algebra structure on $K[x_0,...,x_{n-1}].$
The polynomials $Q_i, i=1,...,n-2$ are Casimir functions for the brackets (\ref{q}) and any Poisson structure on $K^n$, with $n-2$ generic Casimirs $Q_i$, is written in this form. Every Poisson structure of this form is called a Jacobian Poisson structure (JPS) \cite{khi1,khi2} for $\lambda=1$.\\
The case $n=4$ in $(\ref{q})$ corresponds to the classical generalized Sklyanin quadratic Poisson algebra. In the next subsection we present a different generalization to every dimension of the Sklyanin algebras which is the starting point of our paper.
\subsection{Elliptic Poisson algebras $q_{n}(\mathcal E, \eta)$ and $q_{n, k}(\mathcal E, \eta)$}
 These algebras, defined by Odesskii and Feigin, arise as quasi-classical limits of elliptic associative algebras $Q_n(\mathcal E, \eta)$ and $Q_{n, k}(\mathcal E, \eta)$ \cite{odfe1, ode1}.\\
Let $\Gamma=\mathbb Z+\tau\mathbb Z\subset\mathbb C,$ be an integral lattice generated by $1$ and $\tau\in\mathbb C,$ with $\textnormal{Im}\tau > 0.$
Consider the elliptic curve $\mathcal E=\mathbb C/\Gamma$ and a point $\eta$ on this curve.\\
In their article \cite{ode1}, given $k< n,$ mutually prime, Odesskii and Feigin construct an algebra, called  elliptic, $Q_{n, k}(\mathcal E, \eta),$ as an algebra defined by $n$ generators  $\{x_i, i\in\mathbb Z/n\mathbb Z\}$ and the following relations
 \begin{equation}
  \displaystyle\sum_{r\in\mathbb{Z}/n\mathbb{Z}}\frac{\theta_{j-i+r(k-1)}(0)}
 {\theta_{kr}(\eta)\theta_{j-i-r}(-\eta)}x_{j-r}x_{i+r}=0, \ \ i\neq j, i, j\in\mathbb Z/n\mathbb Z
 \end{equation}
 where $\theta_{\alpha}$ are theta functions \cite{ode1}.
\\
 We have the following properties :
 \begin{enumerate}
   \item The center of the algebra $Q_{n, k}(\mathcal E, \eta),$ for generic $\mathcal E$ and $\eta,$ is the algebra of polynomial of $m=pgcd(n, k+1)$ variables of degree $n/m;$
   \item $Q_{n, k}(\mathcal E, 0)=\mathbb C[x_1, \cdots, x_n]$ is commutative;
   \item $Q_{n, n-1}(\mathcal E, \eta)=\mathbb C[x_1, \cdots, x_n]$ is commutative for all $\eta$;
   \item $Q_{n, k}(\mathcal E, \eta)\simeq Q_{n, k'}(\mathcal E, \eta),$ if $kk'\equiv 1$ (mod $n$);
   \item the maps $x_i\mapsto x_{i+1}$ et $x_i\mapsto \varepsilon^ix_i$, where $\varepsilon^n=1,$ define  automorphisms of the algebra $Q_{n, k}(\mathcal E, \eta)$;
   \item the algebras $Q_{n,k}(\mathcal E, \eta)$ are deformations of polynomial algebras. The associated Poisson structure is denoted by $q_{n,k}(\mathcal E, \eta)$;
   \item among the algebras $q_{n,k}(\mathcal E, \eta),$ only $q_{3}(\mathcal E, \eta)$ and the Sklyanin algebra $q_{4}(\mathcal E, \eta)$ are Jacobian Poisson structures.
 \end{enumerate}
The explicit form of $q_{n,k}$ is not known in generic dimension.
\section{The H-invariance}
In the previous section we have introduced one of the most important classes of Poisson algebras related to an elliptic curve. We  want now to change the point of view and to concentrate our study on a particular geometric property of the Poisson bivector related to $q_{n,k}$. As we have reminded in the previous section the discrete Heisenberg group naturally acting on the elliptic curve $\mathcal{E}$ induces a class of automorphisms on the elliptic algebra associated which leaves invariant the algebra structure.\\
Let $V$ be a complex vector space of dimension $n$ and $e_0,\cdots, e_{n-1}$ a basis of $V$.\\
Let us consider $\sigma, \tau$ of $GL(V)$ defined by:
$$\sigma(e_m)=e_{m-1},$$
$$\tau(e_m)=\varepsilon^me_m$$
where $\varepsilon=e^{\frac{2\pi i}{n}}$.
The subspace $H_n\subset GL(V)$ generated by $\sigma$ and $\tau$ is called the Heisenberg group of dimension $n.$\par
From now on we suppose that $V=\mathbb C^n,$ with $x_0, x_1,\cdots, x_{n-1}$ as basis and we consider its algebras of regular functions $\mathbb C[x_0,x_1,\cdots, x_{n-1}].$\\
$\sigma$ and $\tau$ act by automorphisms on the algebra $\mathbb C[x_0,x_1,\cdots, x_{n-1}]$ as follows :
$$\sigma\cdot(\alpha x_0^{\alpha_0}x_1^{\alpha_1}\cdots x_{n-1}^{\alpha_{n-1}})=\alpha x_0^{\alpha_{n-1}}x_1^{\alpha_0}\cdots x_{n-1}^{\alpha_{n-2}};$$
$$\tau\cdot(\alpha x_0^{\alpha_0}x_1^{\alpha_1}\cdots x_{n-1}^{\alpha_{n-1}})=\varepsilon^{\alpha_1+2\alpha_2+\cdots+(n-1)\alpha_{n-1}}\alpha x_0^{\alpha_0}x_1^{\alpha_1}\cdots x_{n-1}^{\alpha_{n-1}}.$$
Then the $\tau$ degree of a monomial $M=\alpha x_0^{\alpha_0}x_1^{\alpha_1}\cdots x_{n-1}^{\alpha_{n-1}}$ is by definition $\alpha_1+2\alpha_2+\cdots+(n-1)\alpha_{n-1}\in\mathbb Z/n\mathbb Z$.
The $\tau$ degree of $M$ is denoted $\tau$-$deg(M)$. A $\tau$ degree of a polynomial is the highest $\tau$-degree of its monomials.
\begin{definition}
A Poisson bracket $\{\cdot, \cdot\}$ on $\mathcal R=\mathbb C[x_0,x_1,\cdots, x_{n-1}]$ is said to be $H$-invariant for the variables $x_0,\cdots, x_{n-1}$ if the automorphisms $\sigma$ and $\tau$ are Poisson morphisms.\\
In other words, the following diagrams commute:
$$\xymatrix{\mathcal R\times\mathcal R \displaystyle{\ar[r]^{\sigma\times \sigma}}\displaystyle{\ar[d]_{\{\cdot, \cdot\}}}&
\mathcal R\times\mathcal R\displaystyle{\ar[d]^{\{\cdot, \cdot\}}}\\
\mathcal R \displaystyle{\ar[r]^{\sigma}}&\mathcal R
}$$
$$\xymatrix{\mathcal R\times\mathcal R \displaystyle{\ar[r]^{\tau\times \tau}}\displaystyle{\ar[d]_{\{\cdot, \cdot\}}}&
\mathcal R\times\mathcal R\displaystyle{\ar[d]^{\{\cdot, \cdot\}}}\\
\mathcal R \displaystyle{\ar[r]^{\tau}}&\mathcal R
}.$$
i.e., $\sigma\cdot\{x_i, x_j\}=\{\sigma\cdot x_i, \sigma\cdot x_j\}$ and $\tau\cdot\{x_i, x_j\}=\{\tau\cdot x_i, \tau\cdot x_j\},$ for all $i, j.$
\end{definition}
A matrix $P_{ij}$ which defines an $H$-invariant Poisson tensor must satisfy four properties: antisymmetry, Jacobi identities, and $\sigma$ and $\tau$ invariance.
\begin{itemize}
 \item $\sigma$ invariance \\
  The condition
\begin{equation}
\{ \sigma x_i,\sigma x_j \} = P_{ij}(\sigma x_0, \sigma x_1 \dots, \sigma x_{n-2}, \sigma x_{n-1})
\end{equation}
implies
\begin{equation}
  P_{i+1\ j+1}( x_0, x_1 \dots, x_{n-2}, x_{n-1} ) = P_{ij}( x_1, x_2 \dots, x_{n-1}, x_0 ).
\end{equation}
\item $\tau$ invariance \\
  The condition
 \begin{equation}
\{ \tau x_i,\tau x_j \} = P_{ij}(\tau x_0, \tau x_1 \dots, \tau  x_{n-2}, \tau  x_{n-1})
\end{equation}
implies
\begin{equation}
  \epsilon^{i+j} P_{ij}( x_0, x_1 \dots, x_{n-2}, x_{n-1} ) = P_{ij}( x_0, \epsilon x_1 \dots, \epsilon^{n-1} x_{n-1}).
\end{equation}
The $\tau$ invariance is, in some sense, a ``discrete'' homogeneity and its intersection with the $\sigma$ invariance is small. If we restrict to polynomial Poisson tensors, then the previous two conditions can be satisfied only by sum of monomials of order $2+sn, \ s \geq 0$. Actually the  structure of the element $P_{ij}$ is given by $a_{i,j}\delta_{n-i-j, 0}+ b x_{i+j} + \sum_k c_k x_k x_{i+j-k} + \sum_{k,l} d_k x_k x_l x_{i+j-k-l} + \dots$ by the $\tau$  invariance. The $\sigma$ invariance imposes that $P_{i+1\ j+1}=a_{i,j}\delta_{n-i-j, 0}+ b x_{i+j+1} + \sum_k c_k x_{k+1} x_{i+j-k+1} + \sum_{k,l} d_k x_{k+1} x_{l+1} x_{i+j-k-l+1} + \dots,$ and thus $a_{i,j}=0$ for all $i, j.$
\item antisymmetry \\
From the $\sigma$ invariance it is obvious that the matrix associated with the Poisson tensor has only n-1 independent entries modulo cyclic permutations of the co-ordinates. However the antisymmetry imposes $P_{n0}(x_0,\dots, x_{n-1})=-P_{0n}(x_0,\dots, x_{n-1})$ and the $\sigma$ invariance $P_{n0}(x_0,x_1,\dots, x_{n-1})=P_{01}(x_{n-1},x_0,\dots, x_{n-2})$. The intersection of the two conditions reduce the number of free functions to $\left[ n/2 \right] $. In the case of even dimensions, the antisymmetry imposes a further restriction on the function $P_{0\ n/2}=-P_{n/2\ 0}$ which has to satisfy $P_{ n/2 \ 0}(x_0, \dots, x_{n-1})=
P_{ 0 \ n/2}(x_{n/2}, \dots, x_{3n/2-1})$ by $\sigma$ invariance. \\
Resuming the properties of the $H$-invariant Poisson tensor we obtain that they have the following form:
\begin{eqnarray*}\label{H-P}
\left(
\begin{array}{ccccccccc}
0&P^0_1&\dots&\dots&P^0_{\frac{n-1}{2}}&-P^{\frac{n+1}{2}}_{\frac{n-1}{2}}&\dots&\dots&-P^{n-1}_1\\
-P^0_1 & 0 & P^1_1&\dots&\dots&P^1_{\frac{n-1}{2}}& \ddots&&\vdots   \\
\vdots&-P^1_1& 0 & P^2_1&&&\ddots&\ddots&\vdots\\
\vdots&&- P^2_1&0&\ddots&&&\ddots&-P^{n-1}_{\frac{n-1}{2}} \\
-P^0_{\frac{n-1}{2}}&&&\ddots&\ddots&\ddots&&&P^\frac{n-1}{2}_{\frac{n-1}{2}} \\
P^\frac{n+1}{2}_\frac{n-1}{2}&-P^1_{\frac{n-1}{2}}   &&&\ddots&0&\ddots&&\vdots\\
\vdots&\ddots&\ddots&&&\ddots&0&P^{n-3}_1&\vdots\\
\vdots&&\ddots&\ddots&&&-P^{n-3}_1&0&P^{n-2}_1 \\
P^{n-1}_1&\dots&\dots&P^{n-1}_{\frac{n-1}{2}}&-P^{\frac{n-1}{2}}_{\frac{n-1}{2}}&\dots&\dots&-P^{n-1}_1&0
\end{array}
\right) &&for\ n\ odd
\end{eqnarray*}
and
\begin{eqnarray*}
\left(
\begin{array}{ccccccccc}
0&P^0_1&\dots&\dots&P^0_{\frac{n-2}{2}}&P^{0}_{\frac{n}{2}}&\dots&\dots&-P^{n-1}_1\\
-P^0_1 & 0 & P^1_1&\dots&\dots&P^1_{\frac{n-2}{2}}& \ddots&&\vdots   \\
\vdots&-P^1_1& 0 & P^2_1&&&\ddots&\ddots&\vdots\\
\vdots&&- P^2_1&0&\ddots&&&\ddots&P^{\frac{n-2}{2}}_{\frac{n}{2}} \\
-P^0_{\frac{n-2}{2}}&&&\ddots&\ddots&\ddots&&&P^\frac{n}{2}_{\frac{n-2}{2}} \\
-P^0_\frac{n}{2}&-P^1_{\frac{n-1}{2}}   &&&\ddots&0&\ddots&&\vdots\\
\vdots&\ddots&\ddots&&&\ddots&0&P^{n-3}_1&\vdots\\
\vdots&&\ddots&\ddots&&&-P^{n-3}_1&0&P^{n-2}_1 \\
P^{n-1}_1&\dots&\dots&-P^{\frac{n-2}{2}}_{\frac{n}{2}}&-P^{\frac{n}{2}}_{\frac{n-2}{2}}&\dots&\dots&-P^{n-2}_1&0
\end{array}
\right) &&for\ n\  even,
\end{eqnarray*}
where $P_i^0=P_i(x_0,x_1,\dots,x_{n-1})$, $P_i^k=P_i(x_k,x_{1+k},\dots,x_{{n-1+k}})$  and $P_i(x_k)$ are  functions such that $P_i(\epsilon^k x_k)=\epsilon^i P_{i}(x_k)$.
\item Jacobi relations \\ The complete study of the Jacobi relations could allow us a complete classification of $H$-invariant Poisson tensors. The task is long in general and we concentrate our study on the quadratic Poisson tensors. The (results of the) computations are the subject of the Section $4$, for small dimensions.
\end{itemize}
\begin{ex}
The Sklyanin-Odesskii-Feigin Poisson algebras $q_{n,k}(\mathcal E)$ are $H$-invariant Poisson algebras.
\end{ex}
\begin{prop}
If $\{\cdot, \cdot\}$ is an $H$-invariant polynomial Poisson bracket, the usual polynomial degree of the monomial of $\{x_i, x_j\}$ has the form $2+sn,\ s \in\mathbb N$.
\end{prop}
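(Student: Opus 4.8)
The plan is to play the two ``discrete'' symmetries off each other: $\tau$-invariance pins down the $\tau$-degree of each monomial of $\{x_i,x_j\}$ modulo $n$, while $\sigma$-invariance relates that $\tau$-degree to the ordinary polynomial degree, and combining the two pieces of information forces the degree into the progression $2+sn$.

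First I would reduce to a single monomial. Write $P_{ij}=\{x_i,x_j\}=\sum_M c_M\,M$. Since $\tau$ scales each monomial $M$ by $\varepsilon^{\tau\text{-}\deg M}$, the $\tau$-invariance relation $\tau\cdot P_{ij}=\{\tau x_i,\tau x_j\}=\varepsilon^{\,i+j}P_{ij}$ forces $\tau\text{-}\deg M\equiv i+j\pmod n$ for every monomial $M$ with $c_M\neq 0$. This is the ``discrete homogeneity'' already observed above, and it means it suffices to prove the claim for one arbitrary monomial $M$ of $P_{ij}$, whose ordinary degree I denote $d$.

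Next I bring in $\sigma$. By $\sigma$-invariance, $P_{i+1,j+1}(x_0,\dots,x_{n-1})=P_{ij}(x_1,\dots,x_{n-1},x_0)$, so the monomial $M=x_0^{\alpha_0}\cdots x_{n-1}^{\alpha_{n-1}}$ of $P_{ij}$ produces the monomial $\sigma\cdot M$ of $P_{i+1,j+1}$ in which $x_\ell$ carries the exponent $\alpha_{\ell-1}$ (indices mod $n$). Cyclically relabelling the variables leaves the total degree unchanged, so $\deg(\sigma\cdot M)=d$; and a short computation modulo $n$ gives $\tau\text{-}\deg(\sigma\cdot M)=\sum_\ell \ell\,\alpha_{\ell-1}\equiv\sum_\ell(\ell+1)\alpha_\ell=d+\tau\text{-}\deg M\pmod n$. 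Applying the constraint of the previous paragraph, now to $P_{i+1,j+1}$, yields $\tau\text{-}\deg(\sigma\cdot M)\equiv(i+1)+(j+1)\equiv i+j+2\pmod n$.

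Comparing the two evaluations of $\tau\text{-}\deg(\sigma\cdot M)$ and substituting $\tau\text{-}\deg M\equiv i+j$ gives $d\equiv 2\pmod n$. Since $d$ is a nonnegative integer and $n\geq 3$, the admissible values are exactly $d=2,\,2+n,\,2+2n,\dots$, i.e. $d=2+sn$ with $s\in\mathbb N$, which is the assertion. I do not anticipate a genuine obstacle: the argument is elementary modular arithmetic, and the only step deserving a word of care is the bookkeeping of how the $\tau$-degree transforms under the cyclic shift $\sigma$ — precisely the congruence $\tau\text{-}\deg(\sigma\cdot M)\equiv d+\tau\text{-}\deg M$ — together with the trivial remark that $d=0$ and $d=1$ cannot occur, since no nonnegative integer below $2$ is congruent to $2$ modulo $n$ when $n\geq 3$ (equivalently, a constant or linear bracket is already excluded by the antisymmetry/$\sigma$ discussion preceding the Proposition).
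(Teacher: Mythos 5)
Your proposal is correct and follows essentially the same route as the paper: both use $\tau$-invariance to fix the $\tau$-degree of every monomial of $\{x_i,x_j\}$ at $i+j \pmod n$, then observe that applying $\sigma$ shifts the $\tau$-degree by the ordinary degree while $\tau$-invariance of $\{x_{i+1},x_{j+1}\}$ forces that shift to be $2 \pmod n$. The only cosmetic difference is that you track a single arbitrary monomial explicitly, whereas the paper first reduces to the homogeneous case and writes the monomials in the indexed form $x_{k_1}\cdots x_{k_{N-1}}x_{i+j-k_1-\cdots-k_{N-1}}$.
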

\begin{proof}
Since the $H$-action does not change the usual degree of a polynomial, we will do the proof just for a homogeneous case. Let $\{\cdot, \cdot\}$ be a non trivial $H$-invariant homogeneous Poisson of usual polynomial degree $N.$ Then $\{x_i, x_j\}$ have the form:
\begin{equation}
\{x_i, x_j\}=\displaystyle{\sum_{k_{1},\cdots, k_{N-1}}} a_{k_{1},\cdots, k_{N-1}}x_{k_1}\cdots x_{k_{N-1}}x_{i+j-k_{1}-\cdots-k_{N-1}}.
\end{equation}
We have:
\begin{equation}
\sigma\cdot\{x_i, x_j\}=\displaystyle{\sum_{k_{1},\cdots, k_{N-1}}} a_{k_{1},\cdots, k_{N-1}}x_{k_1+1}\cdots x_{k_{N-1}+1}x_{i+j-k_{1}-\cdots-k_{N-1}+1}.
\end{equation}
Since $\sigma\cdot\{x_i, x_j\}=\{x_{i+1}, x_{j+1}\}$, $\tau$-$deg(\{x_{i+1}, x_{j+1}\})=i+j+2=i+j+N.$ Hence $N=2+sn,$ $s\in\mathbb N.$
\end{proof}
\begin{rem}
We want to stress that there are non quadratic Heisenberg invariant polynomial Poisson structures. The simplest example is given by $H_3$ invariant Jacobi Poisson structure given by the Casimir
\begin{equation}
 C=\frac{1}{2}{x_0}^2{x_1}^2{x_2}^2.
\end{equation}
\end{rem}
\section{The unimodularity and  H-invariant Poisson tensors}
\subsection{Unimodular Poisson structures}
Let $\mathcal R=K[x_0, \cdots, x_{n-1}]$ be the polynomial algebra with $n$ variables.\\
 We recall that the $\mathcal R$-module of K\"ahler differentials of $\mathcal R$ is denoted by $\Omega^1(\mathcal R)$ and the graded $\mathcal R$-module $\Omega^p(\mathcal R) :=\bigwedge^p\Omega^1(\mathcal R)$ is the module of all K\"ahler $p$-differential.
As a vector space (resp. as an $\mathcal R$-module) $\Omega^p(\mathcal R)$ is generated by elements of the form $FdF_1\wedge...\wedge dF_p$ (resp. of the form $dF_1\wedge...\wedge dF_p$) where $F, F_i\in\mathcal R$, $i=1,...,p$.
We denote by $\Omega^{\bullet}(\mathcal R)=\displaystyle{\oplus_{p\in\mathbb N}}\Omega^p(\mathcal R)$, with the convention that $\Omega^0(\mathcal R)=\mathcal R$, the space of all K\"ahler differentials.\\
The differential $d : \mathcal R\longrightarrow\Omega^1(\mathcal R)$ extends to a graded $K$-linear map $$d :\Omega^\bullet(\mathcal R)\longrightarrow\Omega^{\bullet+1}(\mathcal R)$$
by setting :
$$
d(GdF_1\wedge...\wedge dF_p) :=dG\wedge dF_1\wedge...\wedge dF_p
$$
for $G,F_1,...,F_p\in\mathcal R$, where $p\in\mathbb{N}$. It is called the de Rham  differential. It is a graded derivation, of degree $1$, of $(\Omega^\bullet(\mathcal R), \wedge)$, such that $d^2=0$.\\
A skew-symmetric $k$-linear map $P\in\mbox{Hom}_K(\wedge^k\mathcal R,\mathcal R)$ is called a skew-symmetric $k$-derivation of $\mathcal R$ with values in $\mathcal R$ if it is a derivation in each of its arguments.\\
The $\mathcal R$-module of skew-symmetric $k$-derivation is denoted by $\mathcal X^k(\mathcal R).$ We define the graded $\mathcal R$-module
$$\mathcal X^\bullet(\mathcal R):=\displaystyle{\bigoplus_{k\in\mathbb N}}\mathcal X^k(\mathcal R)$$
whose elements are called skew-symmetric multi-derivations. By convention, the first term in this sum, $\mathcal X^0$, is $\mathcal R$, and $\mathcal X^k(\mathcal R):=\{0\}$ for $k<0$.\\
One can observe that if $\{\cdot,\cdot\}$ is a Poisson structure on algebra $\mathcal R,$ then $\{\cdot, \cdot\}\in\mathcal X^2(\mathcal R).$\\
We consider now the affine space of dimension $n,$ $K^n$ and its algebra of regular functions $\mathcal R=K[x_0,...,x_{n-1}]$.\\
We denote by $S_{p,q}$ the set of all $(p,q)-$shuffles, that is permutations $\sigma$ of the set $\{0,...,p+q-1\}$ such that $\sigma(0)<...<\sigma(p-1)$ and $\sigma(p)<...<\sigma(p+q-1), p, q\in\mathbb N $.\\
The family of maps $\star : \mathcal X^k(\mathcal A)\longrightarrow\Omega^{n-k}(\mathcal A)$, defined by :
$$\star Q=\displaystyle{\sum_{\alpha\in S_{k,n-k}}}\epsilon(\alpha)Q(x_{\alpha(0)},...,x_{\alpha(k-1)})dx_{\alpha(k)}\wedge...\wedge dx_{\alpha(n-1)}$$
are isomorphisms.\\
Let $D_\bullet$ be the map : $D_\bullet := \star^{-1}\circ d\circ\star : \mathcal X^\bullet(\mathcal R)\longrightarrow\mathcal X^{\bullet -1}(\mathcal R)$, where $d$ is the de Rham differential.\\
Let us remain that the Poisson coboundary operator associated with a Poisson algebra $(\mathcal A, \{\cdot,\cdot\})$ and denoted by $\delta : \mathcal X^\bullet(\mathcal A)\longrightarrow\mathcal X^{\bullet+1}(\mathcal A)$, is given by:
$$\delta^k(Q)(F_0, F_1,..., F_k)=\displaystyle{\sum_{0\leq i\leq k}}(-1)^{i}\{F_i, Q(F_0, F_1,...,\widehat F_i,..., F_k)\}$$
$$+\displaystyle{\sum_{0\leq i<j\leq k}} (-1)^{i+j}Q(\{F_i, F_j\}, F_0, ...,\widehat F_i,..., \widehat F_j,...,F_k)$$\\
where $F_0,...,F_k\in\mathcal A,$ and $Q\in\mathcal X^k(\mathcal A).$\\
One can check, by a direct computation, that $\delta^k$ is well-defined and that it is a coboundary o\-pe\-ra\-tor, $\delta^{k+1}\circ\delta^{k}=0.$\\
The cohomology of this complex is called the Poisson cohomology associated with $(\mathcal A, \pi)$ and denoted by $PH^{\bullet}(\mathcal A, \pi).$\\
Let $\pi$ a Poisson structure on $\mathbb C[x_0,\cdots, x_{n-1}].$
\begin{equation}
\pi=\displaystyle{\sum_{0\leq i, j\leq n-1}}\{x_i, x_j\}\frac{\partial}{\partial x_i}\wedge\frac{\partial}{\partial x_j}.
\end{equation}
Then
$$\begin{array}{ll}
    \star(\pi)&=\displaystyle{\sum_{0\leq i, j\leq n-1}}\displaystyle{\sum_{\alpha\in S_{2, n-2}}}(-1)^{|\alpha|}\delta_{i, \alpha(i)}\delta_{j, \alpha(j)}\{x_i, x_j\}dx_{\alpha(2)}\wedge...\wedge dx_{\alpha(n-1)}\\ &\\
    &=\displaystyle{\sum_{0\leq i, j\leq n-1}}(-1)^{i+j-1}\{x_i, x_j\}dx_0\wedge...\wedge \widehat{dx_i}\wedge...\wedge \widehat{dx_j}\wedge...\wedge {dx_{n-1}}.
\end{array}$$
Therefore
$$\begin{array}{ll}
d(\star(\pi))&=\displaystyle{\sum_{0\leq i, j\leq n-1}}\displaystyle{\sum_{l=0}^{n-1}}(-1)^{i+j-1}\frac{\partial\{x_i, x_j\}}{\partial x_l}dx_l\wedge dx_0\wedge...\wedge \widehat{dx_i}\wedge...\wedge \widehat{dx_j}\wedge...\wedge {dx_{n-1}}\\ & \\
   &=\displaystyle{\sum_{0\leq i, j\leq n-1}}(-1)^{i}\frac{\partial\{x_i, x_j\}}{\partial x_j}dx_0\wedge...\wedge \widehat{dx_i}\wedge...\wedge {dx_{n-1}}-\\
   &\  - \displaystyle{\sum_{0\leq i, j\leq n-1}}(-1)^{j}\frac{\partial\{x_i, x_j\}}{\partial x_i}dx_0\wedge...\wedge \widehat{dx_j}\wedge...\wedge {dx_{n-1}}.
\end{array}$$
Then
$$\begin{array}{ll}
D_2(\pi)&=\displaystyle{\sum_{0\leq i, j\leq n-1}}\frac{\partial\{x_i, x_j\}}{\partial x_j}\frac{\partial}{\partial x_i}-\displaystyle{\sum_{0\leq i, j\leq n-1}}\frac{\partial\{x_i, x_j\}}{\partial x_i}\frac{\partial}{\partial x_j}\\
&=2\displaystyle{\sum_{0\leq i, j\leq n-1}}\frac{\partial\{x_i, x_j\}}{\partial x_j}\frac{\partial}{\partial x_i}.
\end{array}$$
\begin{prop}\cite{xu1}
Let $\pi$ be a Poisson tensor. $D_2(\pi)$ is a Poisson cocycle: $\delta^1(D_2(\pi))=0$.
\end{prop}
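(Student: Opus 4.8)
This is an instance of a general principle: the divergence operator attached to a fixed volume form is a generator of the Schouten--Nijenhuis bracket, and together with its square being zero this forces the divergence of any Poisson bivector to be $\delta^1$-closed; the plan is to make this precise in our algebraic setting. Write $\Delta := D_\bullet = \star^{-1}\circ d\circ\star : \mathcal X^\bullet(\mathcal R)\to\mathcal X^{\bullet-1}(\mathcal R)$. Since $\star$ is an isomorphism of $\mathcal R$-modules and $d^2=0$, we get $\Delta^2=0$ for free; moreover $\Delta$ is the (generalized) divergence of multivector fields with respect to $\mu=dx_0\wedge\cdots\wedge dx_{n-1}$, and by construction $D_2(\pi)=\Delta(\pi)$.

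First I would recall, or re-derive in coordinates along the lines of the computation already started just before the statement, Koszul's theorem: $\Delta$ is a generator of the Schouten--Nijenhuis bracket $[\cdot,\cdot]$ on $\mathcal X^\bullet(\mathcal R)$, i.e.
$$[P,Q]=(-1)^{p}\bigl(\Delta(P\wedge Q)-\Delta(P)\wedge Q-(-1)^{p}P\wedge\Delta(Q)\bigr)$$
for every $P\in\mathcal X^{p}(\mathcal R)$ and $Q\in\mathcal X^{\bullet}(\mathcal R)$. This is the content referenced from \cite{xu1}; it is proved by checking that both sides are derivations in $P$ and in $Q$ for $\wedge$ and then testing on generators, i.e. on $\mathcal R$ and on $\mathcal X^1(\mathcal R)$.

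Next I would run the following short manipulation. The Jacobi identity for $\{\cdot,\cdot\}$ is equivalent to $[\pi,\pi]=0$. Applying Koszul's identity with $P=Q=\pi$ (so $p=2$), and using the graded commutativity $\pi\wedge\Delta(\pi)=\Delta(\pi)\wedge\pi$, gives $0=\Delta(\pi\wedge\pi)-2\,\pi\wedge\Delta(\pi)$, hence $\Delta(\pi\wedge\pi)=2\,\pi\wedge\Delta(\pi)$. Applying $\Delta$ and using $\Delta^2=0$ yields $\Delta\bigl(\pi\wedge\Delta(\pi)\bigr)=0$. Now apply Koszul's identity once more, with $P=\pi$ and $Q=\Delta(\pi)\in\mathcal X^1(\mathcal R)$:
$$[\pi,\Delta(\pi)]=\Delta\bigl(\pi\wedge\Delta(\pi)\bigr)-\Delta(\pi)\wedge\Delta(\pi)-\pi\wedge\Delta^2(\pi).$$
The first term vanishes by the previous step, the last by $\Delta^2=0$, and the middle one because $\Delta(\pi)$ is an odd (degree one) element of the graded-commutative algebra $(\mathcal X^\bullet(\mathcal R),\wedge)$, so $\Delta(\pi)\wedge\Delta(\pi)=0$. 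Therefore $[\pi,\Delta(\pi)]=0$. Finally, from the displayed formula for $\delta^k$ one reads off that $\delta^1$ acts on a vector field $X$ by $\delta^1(X)=[\pi,X]$ (up to an irrelevant overall sign), whence $\delta^1(D_2(\pi))=\delta^1(\Delta(\pi))=0$, which is the claim.

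The only non-formal ingredient is Koszul's generating-operator identity; granting it, the argument is two applications of that identity plus $\Delta^2=0$. If one prefers to avoid it, the alternative is the direct route: evaluate $\delta^1(D_2(\pi))$ on a pair of coordinate functions $(x_a,x_b)$, substitute $D_2(\pi)(x_c)=2\sum_{i,j}(\partial\{x_i,x_j\}/\partial x_j)\,\delta_{ic}$, and verify that the resulting combination of second partial derivatives of the $\{x_i,x_j\}$ cancels once the Jacobi identity $\sum_{\mathrm{cyc}}\{x_a,\{x_b,x_c\}\}=0$ is differentiated. I expect the main obstacle in either approach to be pure bookkeeping: matching the sign conventions of $\star$, $\Delta$ and $\delta^1$ fixed in the excerpt in the first route, or controlling the combinatorics of the derivative terms in the second.
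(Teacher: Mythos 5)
The paper offers no proof of this proposition at all — it simply cites Xu \cite{xu1} — and your argument is correct: it is precisely the standard Koszul/BV-generator proof contained in that reference ($\Delta=\star^{-1}d\star$ generates the Schouten bracket, $\Delta^2=0$ and $[\pi,\pi]=0$ force $[\pi,\Delta\pi]=0$, and $\delta^1=\pm[\pi,\cdot]$ on vector fields). Modulo fixing one consistent sign convention for the generating identity, nothing is missing.
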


The modular class of a Poisson tensor $\pi$ on $\mathcal R$ is the class of $D_2(\pi)$ in the first Poisson cohomological group $PH^1(\mathcal R, \pi),$ associated to $\pi.$ A Poisson tensor is said to be unimodular if its modular class is trivial \cite{xu1}.

From \cite{khi1} and \cite{khi2} we know that all JPS are unimodular. Therefore $H$-invariant Poisson structure in dimension 3 and 4 are unimodular. What happens in higher dimension?

\subsection{The unimodularity and  H-invariant Poisson structures}
Let $\pi$ a Poisson structure on $\mathbb C[x_0,\cdots, x_{n-1}].$
\begin{equation}
\pi=\displaystyle{\sum_{0\leq i, j\leq n-1}}\{x_i, x_j\}\frac{\partial}{\partial x_i}\wedge\frac{\partial}{\partial x_j}.
\end{equation}
\begin{prop}
A Poisson structure $\pi=\{\cdot, \cdot\}$ is  unimodular if $$\displaystyle{\sum_{i=0}^{n-1}}\frac{\partial\{x_k, x_i\}}{\partial x_i}=0,$$ for all $k=0\cdots, n-1.$
\end{prop}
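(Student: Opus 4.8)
The proposition is an immediate consequence of the explicit formula for $D_2(\pi)$ obtained just above, so the plan is simply to make that deduction precise. Recall that the modular class of $\pi$ is by definition the class of the skew-symmetric $1$-derivation $D_2(\pi)$ in $PH^1(\mathcal R,\pi)$; by the preceding proposition (following \cite{xu1}), $D_2(\pi)$ is a genuine Poisson $1$-cocycle, so this class is well defined. Hence it suffices to prove that, under the stated hypothesis, $D_2(\pi)$ vanishes \emph{identically} as an element of $\mathcal X^1(\mathcal R)$: the zero derivation obviously has trivial class in $PH^1(\mathcal R,\pi)$, so $\pi$ is then unimodular.

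To that end I would start from the formula established before the proposition,
\[
D_2(\pi)=2\sum_{0\leq i,j\leq n-1}\frac{\partial\{x_i,x_j\}}{\partial x_j}\,\frac{\partial}{\partial x_i},
\]
and regroup the sum according to which partial derivative $\frac{\partial}{\partial x_k}$ occurs. Collecting the terms with $i=k$ yields
\[
D_2(\pi)=2\sum_{k=0}^{n-1}\left(\sum_{i=0}^{n-1}\frac{\partial\{x_k,x_i\}}{\partial x_i}\right)\frac{\partial}{\partial x_k},
\]
so that the component of the modular vector field along $\frac{\partial}{\partial x_k}$ is exactly twice the ``divergence'' $\sum_{i=0}^{n-1}\partial_{x_i}\{x_k,x_i\}$ of $\pi$ with respect to the constant volume form $dx_0\wedge\cdots\wedge dx_{n-1}$ underlying the star operator. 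The hypothesis of the proposition asserts precisely that each of these divergences vanishes; therefore every coefficient in the displayed sum is zero, $D_2(\pi)=0$, and we conclude as above.

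There is no substantial obstacle here: the whole content of the statement is contained in the computation of $D_2(\pi)$ performed earlier. The only points requiring care are bookkeeping ones — that the index $i$ contracted in the hypothesis is indeed the one summed against $\frac{\partial}{\partial x_j}$ in $D_2(\pi)$, that the overall factor $2$ is irrelevant since we need only vanishing rather than a specific value, and that the pertinent volume form is the constant Euclidean one, which is exactly what makes the informal equivalence ``unimodular $\Longleftrightarrow$ divergence-free'' coincide literally with the condition displayed in the statement. One should also note that only the stated implication is being proved: triviality of the modular class would in general allow $D_2(\pi)$ to be a nonzero Hamiltonian vector field, so a converse would require the strictly harder step of excluding such a representative.
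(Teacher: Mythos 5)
Your argument is correct and is exactly the one the paper intends: the proposition is stated without an explicit proof precisely because it follows immediately from the formula $D_2(\pi)=2\sum_{i,j}\partial_{x_j}\{x_i,x_j\}\,\partial_{x_i}$ computed just beforehand, and your regrouping plus the observation that the identically zero vector field has trivial class in $PH^1(\mathcal R,\pi)$ is the intended deduction. Your closing remark that only the stated implication (not a converse) is obtained is also accurate.
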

\begin{lem}\label{Hder}
Let $P\in \mathcal A=K[x_0,\cdots, x_{n-1}].$ \\
For all $i\in\{0, \cdots, n-1\},$ $$\sigma\cdot\frac{\partial F}{\partial x_i}= \frac{\partial(\sigma\cdot F)}{\partial(\sigma\cdot x_i)}.$$
\end{lem}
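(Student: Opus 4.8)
The statement to prove is Lemma~\ref{Hder}: for $F\in\mathcal A=K[x_0,\dots,x_{n-1}]$ and any $i$, one has $\sigma\cdot\dfrac{\partial F}{\partial x_i}=\dfrac{\partial(\sigma\cdot F)}{\partial(\sigma\cdot x_i)}$. The plan is to reduce to monomials by linearity, then carry out an explicit computation using the definition $\sigma\cdot x_j=x_{j+1}$ (indices mod $n$), equivalently the algebra automorphism sending $x_j\mapsto x_{j+1}$. Since both sides are $K$-linear in $F$ and $\sigma$ is an algebra homomorphism, it suffices to check the identity on a monomial $M=x_0^{\alpha_0}\cdots x_{n-1}^{\alpha_{n-1}}$, and by the Leibniz rule it further suffices to understand how $\sigma$ interacts with a single partial derivative.

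\medskip

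First I would make precise what $\dfrac{\partial}{\partial(\sigma\cdot x_i)}$ means: since $\sigma\cdot x_i=x_{i+1}$, this is just $\dfrac{\partial}{\partial x_{i+1}}$ (indices mod $n$). So the claim is the commutation relation
\[
\sigma\circ\frac{\partial}{\partial x_i}=\frac{\partial}{\partial x_{i+1}}\circ\sigma
\]
as $K$-linear endomorphisms of $\mathcal A$. Second, I would verify this on the generators: for a generator $x_j$ we have $\sigma\cdot\partial_{x_i}x_j=\sigma\cdot\delta_{ij}=\delta_{ij}$, while $\partial_{x_{i+1}}(\sigma\cdot x_j)=\partial_{x_{i+1}}x_{j+1}=\delta_{i+1,j+1}=\delta_{ij}$; the two agree. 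Third, I would promote this to all of $\mathcal A$: both $\sigma\circ\partial_{x_i}$ and $\partial_{x_{i+1}}\circ\sigma$ are $K$-linear maps $\mathcal A\to\mathcal A$ which are derivations relative to the algebra map $\sigma$ (i.e. $\sigma$-twisted derivations: $D(FG)=D(F)\,\sigma(G)+\sigma(F)\,D(G)$), because $\partial_{x_i}$ is an ordinary derivation and $\sigma$ is an algebra homomorphism; a $\sigma$-twisted derivation on a polynomial algebra is determined by its values on the generators, so the two maps coincide everywhere. Applying both sides to an arbitrary $F$ and writing out what $\sigma\cdot x_i$ is then gives exactly the displayed identity.

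\medskip

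Alternatively, and perhaps more transparently, one can do the monomial computation directly: for $M=x_0^{\alpha_0}\cdots x_{n-1}^{\alpha_{n-1}}$ we have $\partial_{x_i}M=\alpha_i\,x_0^{\alpha_0}\cdots x_i^{\alpha_i-1}\cdots x_{n-1}^{\alpha_{n-1}}$, so $\sigma\cdot\partial_{x_i}M=\alpha_i\,x_1^{\alpha_0}\cdots x_{i+1}^{\alpha_i-1}\cdots x_{0}^{\alpha_{n-1}}$, where the exponent of $x_{i+1}$ has been lowered by one; on the other hand $\sigma\cdot M=x_1^{\alpha_0}\cdots x_{i+1}^{\alpha_i}\cdots x_{0}^{\alpha_{n-1}}$, and differentiating with respect to $x_{i+1}$ brings down the factor $\alpha_i$ and lowers the exponent of $x_{i+1}$ by one, producing the same monomial. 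Summing over monomials finishes the proof. There is essentially no obstacle here: the only thing to be careful about is the index bookkeeping modulo $n$ (in particular that $\sigma$ cyclically permutes the generators, so "$x_i\mapsto x_{i-1}$" on the vector space $V$ corresponds to "$x_i\mapsto x_{i+1}$" on the coordinate functions, as already fixed in the paper's conventions) and the fact that, because $\sigma$ permutes the variables, no repeated-variable subtleties arise. The mild subtlety worth a sentence in the write-up is justifying the reduction to generators via the twisted-Leibniz property, which is the conceptual content; everything else is a direct check.
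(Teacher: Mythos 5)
Your proposal is correct, and your ``alternative'' direct monomial computation is exactly the paper's own proof: the authors check the identity on a monomial $M=\alpha x_0^{\alpha_0}\cdots x_{n-1}^{\alpha_{n-1}}$ by comparing $\sigma\cdot\partial_{x_i}M$ with $\partial_{x_{i+1}}(\sigma\cdot M)$, using the same convention that $\sigma$ acts on coordinate functions by $x_j\mapsto x_{j+1}$. Your first route via $\sigma$-twisted derivations is a valid and slightly more conceptual packaging of the same check on generators, but it buys nothing essential here since $\sigma$ merely permutes the variables.
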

\begin{proof}
Consider the monomial $M=\alpha x_0^{\alpha_0}x_1^{\alpha_1}\cdots x_{i-1}^{\alpha_{i-1}}x_i^{\alpha_i}x_{i+1}^{\alpha_{i+1}}\cdots x_{n-1}^{\alpha_{n-1}}.$ We have:
$$\sigma\cdot M=\alpha x_{0}^{\alpha_{n-1}}x_1^{\alpha_0}\cdots x_{i-1}^{\alpha_{i-2}}x_i^{\alpha_{i-1}}x_{i+1}^{\alpha_{i}}\cdots x_{n-1}^{\alpha_{n-2}}.$$
Then:
$$\begin{array}{ll}
\frac{\partial (\sigma\cdot M)}{\partial x_{i+1}}&=\alpha\alpha_i x_{0}^{\alpha_{n-1}}x_1^{\alpha_0}\cdots x_{i-1}^{\alpha_{i-2}}x_i^{\alpha_{i-1}}x_{i+1}^{\alpha_{i-1}}\cdots x_{n-1}^{\alpha_{n-2}}\\
&=\sigma\cdot(\frac{\partial M}{\partial x_{i}})
\end{array}
$$
\end{proof}
\begin{prop}\label{hqc}
A $H$-invariant Poisson structure $\pi=\{\cdot, \cdot\}$ is  unimodular if  $$\displaystyle{\sum_{i=0}^{n-1}}\frac{\partial\{x_0, x_i\}}{\partial x_i}=0.$$
\end{prop}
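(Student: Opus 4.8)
\emph{Proof proposal.} The plan is to bootstrap from the criterion of the previous Proposition, which guarantees unimodularity of $\pi$ as soon as $\sum_{i=0}^{n-1}\partial\{x_k,x_i\}/\partial x_i=0$ holds for \emph{every} $k=0,\dots,n-1$, not merely for $k=0$. So it is enough to show that the single identity assumed for $k=0$ propagates to all $k$, and the natural tool for that is the $\sigma$-invariance together with Lemma \ref{Hder}.

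First I would record that $\sigma$ acts on the coordinate functions by $\sigma\cdot x_i=x_{i+1}$ (indices mod $n$) and that, being a Poisson automorphism, it satisfies $\sigma\cdot\{x_i,x_j\}=\{\sigma\cdot x_i,\sigma\cdot x_j\}=\{x_{i+1},x_{j+1}\}$. Applying Lemma \ref{Hder} to the polynomial $F=\{x_0,x_i\}\in\mathcal A$ then gives
\[
\sigma\cdot\frac{\partial\{x_0,x_i\}}{\partial x_i}
=\frac{\partial(\sigma\cdot\{x_0,x_i\})}{\partial(\sigma\cdot x_i)}
=\frac{\partial\{x_1,x_{i+1}\}}{\partial x_{i+1}}.
\]
Since $\sigma$ is a $K$-algebra automorphism, it is linear and sends $0$ to $0$; applying it to the hypothesis $\sum_{i=0}^{n-1}\partial\{x_0,x_i\}/\partial x_i=0$ yields
\[
0=\sigma\cdot\sum_{i=0}^{n-1}\frac{\partial\{x_0,x_i\}}{\partial x_i}
=\sum_{i=0}^{n-1}\frac{\partial\{x_1,x_{i+1}\}}{\partial x_{i+1}}
=\sum_{j=0}^{n-1}\frac{\partial\{x_1,x_j\}}{\partial x_j},
\]
the last step because $i\mapsto i+1$ is a bijection of $\mathbb Z/n\mathbb Z$, so the full sum over $j$ is unchanged. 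Iterating $\sigma$ a total of $k$ times gives $\sum_{i=0}^{n-1}\partial\{x_k,x_i\}/\partial x_i=0$ for each $k$, and the previous Proposition then yields that $\pi$ is unimodular.

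There is essentially no serious obstacle here. The only point that needs care is the interaction of the $\sigma$-action with partial derivatives --- the index shift $\partial/\partial x_i\mapsto\partial/\partial x_{i+1}$ --- which is exactly Lemma \ref{Hder}, and the observation that cyclic reindexing leaves $\sum_i$ invariant; note also that the $\tau$-invariance plays no role in this particular statement. One should simply double-check the edge cases of the index arithmetic modulo $n$, but these are routine.
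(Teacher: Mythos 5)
Your argument is correct and is exactly the one the paper intends: its proof of this proposition is the single line ``direct consequence of Lemma \ref{Hder},'' and your write-up simply makes explicit the propagation from $k=0$ to all $k$ via $\sigma$-invariance, the index-shift identity of the lemma, and cyclic reindexing of the sum. No difference in approach, only in the level of detail.
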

\begin{proof}
This is direct consequence of the lemma (\ref{Hder}).
\end{proof}

\begin{theo}
Every $H$-invariant quadratic Poisson structure is unimodular.
\end{theo}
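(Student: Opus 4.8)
The plan is to reduce everything, via Proposition~\ref{hqc}, to a single computation: an $H$-invariant Poisson structure is unimodular as soon as $\sum_{i=0}^{n-1}\frac{\partial\{x_0,x_i\}}{\partial x_i}=0$, so it suffices to verify this identity when the bracket is quadratic.

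First I would describe the shape of $\{x_0,x_i\}$. By the Proposition on the polynomial degree of $H$-invariant brackets, each $\{x_0,x_i\}$ is homogeneous of degree $2$, and by $\tau$-invariance every monomial $x_kx_l$ occurring in it satisfies $k+l\equiv i\pmod n$. When we differentiate with respect to $x_i$ only monomials containing $x_i$ survive; such a monomial is $x_ix_l$ with $l\equiv i-i\equiv 0\pmod n$, i.e. $l=0$, so only the monomial $x_0x_i$ contributes (the monomial $x_i^2$ would force $2i\equiv i$, hence $i\equiv 0\pmod n$, which is excluded for $i\neq 0$, and $\{x_0,x_0\}=0$). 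Denoting by $a_i$ the coefficient of $x_0x_i$ in $\{x_0,x_i\}$ we get $\frac{\partial\{x_0,x_i\}}{\partial x_i}=a_ix_0$, whence
\[
\sum_{i=0}^{n-1}\frac{\partial\{x_0,x_i\}}{\partial x_i}=\Bigl(\,\sum_{i=1}^{n-1}a_i\Bigr)x_0 .
\]

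The second step is to show $\sum_{i=1}^{n-1}a_i=0$ using the remaining two defining properties. Antisymmetry gives that $a_i$ is minus the coefficient of $x_0x_i$ in $P_{i0}=\{x_i,x_0\}$. Iterating the $\sigma$-invariance $P_{i+1\,j+1}(x_0,\dots,x_{n-1})=P_{ij}(x_1,\dots,x_{n-1},x_0)$ yields the cyclic form $P_{i0}(x_0,\dots,x_{n-1})=P_{0\,\,n-i}(x_i,x_{i+1},\dots,x_{i+n-1})$ (indices mod $n$). On the right-hand side the $p$-th argument of $P_{0\,\,n-i}$ is $x_{i+p}$, so $x_i$ sits in slot $0$ and $x_0=x_{i+(n-i)}$ sits in slot $n-i$; hence the coefficient of $x_0x_i$ there equals the coefficient of $x_0x_{n-i}$ in $P_{0\,\,n-i}(x_0,\dots,x_{n-1})=\{x_0,x_{n-i}\}$, namely $a_{n-i}$. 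Combining the two facts, $a_i=-a_{n-i}$ for all $i$. Summing over $i=1,\dots,n-1$ and pairing $i\leftrightarrow n-i$ annihilates everything; if $n$ is even the self-paired index $i=n/2$ satisfies $a_{n/2}=-a_{n/2}$, so $a_{n/2}=0$ too. Therefore $\sum_{i=1}^{n-1}a_i=0$, the divergence sum vanishes, and Proposition~\ref{hqc} concludes.

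I expect the only real obstacle to be the index bookkeeping in the second step: writing the $\sigma$-invariance in cyclic form and correctly matching the coefficient $a_i$ of $P_{i0}$ with the coefficient $a_{n-i}$ of $P_{0\,n-i}$; once $a_i=-a_{n-i}$ is in hand the conclusion is immediate. As an alternative, more conceptual route one may note that $\pi\mapsto D_2(\pi)$ is $GL_n$-equivariant, so $D_2(\pi)$ is again $H$-invariant; an $H$-invariant linear vector field must be a scalar multiple $\mu\sum_i x_i\partial_{x_i}$ of the Euler field (by $\tau$-invariance it is diagonal, by $\sigma$-invariance its diagonal entries coincide); and $\operatorname{div}D_2(\pi)=D_1D_2(\pi)=\star^{-1}\circ d^2\circ\star(\pi)=0$ forces $n\mu=0$, i.e. $D_2(\pi)=0$. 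Either way the modular class is trivial.
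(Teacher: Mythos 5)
Your main argument is correct and is essentially the paper's own proof: both reduce the theorem to the criterion $\sum_{i}\partial\{x_0,x_i\}/\partial x_i=0$ of Proposition~\ref{hqc}, observe that by $\tau$-invariance only the coefficient of $x_0x_i$ in $\{x_0,x_i\}$ survives the differentiation, and then cancel the $i$-th term against the $(n-i)$-th using antisymmetry combined with $\sigma$-invariance --- your identity $a_i=-a_{n-i}$ is precisely the paper's cancellation $\partial_{x_i}\{x_0,x_i\}-\sigma^{n-i}\cdot\bigl(\partial_{x_0}\{x_0,x_i\}\bigr)=0$, with the minor cosmetic advantage that you treat even and odd $n$ uniformly where the paper proves the odd case and disposes of $i=n/2$ by a separate remark. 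The alternative route you sketch at the end, however, is genuinely different and worth keeping: since $\star$ is equivariant under a linear map up to its determinant, $D_2=\star^{-1}\circ d\circ\star$ commutes with the $H$-action, so $D_2(\pi)$ is an $H$-invariant linear vector field, hence $\tau$-invariance makes it diagonal and $\sigma$-invariance makes it a multiple $\mu\sum_i x_i\partial_{x_i}$ of the Euler field; then $0=D_1D_2(\pi)=\mathrm{div}\bigl(\mu\sum_i x_i\partial_{x_i}\bigr)=n\mu$ forces $\mu=0$. That version buys a conceptual explanation of why the modular vector field itself vanishes (not merely its cohomology class) and makes transparent the paper's remark that only antisymmetry and $H$-invariance, never the Jacobi identity, enter the proof.
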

\begin{proof}
We will give the proof in the odd case. The proof for the even case is done in a similar way with the remark that $\frac{\partial P_{0,\frac{n}{2}}}{\partial x_{\frac{n}{2}}}=0.$ \\
Now, we suppose that $n$ is odd. As it is mentioned in the previous proposition (\ref{hqc}), we just have to prove that $\displaystyle{\sum_{i=0}^{n-1}}\frac{\partial\{x_0, x_i\}}{\partial x_i}=0.$ Using the $\sigma$-invariance and antisymmetric condition, we have:
$$\displaystyle{\sum_{i=0}^{n-1}}\frac{\partial\{x_0, x_i\}}{\partial x_i}=\displaystyle{\sum_{i=1}^{\frac{n-1}{2}}}\left(\frac{\partial\{x_0, x_i\}}{\partial x_i}-\sigma^{n-i}\cdot(\frac{\partial\{x_0, x_i\}}{\partial x_0})\right).$$
Since $\{x_0, x_i\}=\displaystyle{\sum_{l=0}^{n-1}}p_l^ix_lx_{i-l},$ the term of this sum with $x_i$ is $(p_0^i+p_i^i)x_0x_i$ and the term $x_0$ is also $(p_0^i+p_i^i)x_0x_i.$ Hence:
$$\frac{\partial\{x_0, x_i\}}{\partial x_i}=(p_0^i+p_i^i)x_0$$
$$\frac{\partial\{x_0, x_i\}}{\partial x_0}=(p_0^i+p_i^i)x_i.$$
Therefore $\frac{\partial\{x_0, x_i\}}{\partial x_i}-\sigma^{n-i}\cdot(\frac{\partial\{x_0, x_i\}}{\partial x_0})=0,$ for all $i=1,\cdot, \frac{n-1}{2}$ and we obtain the result.
\end{proof}
\begin{rem}
One can note that in our proof, we just have used the antisymmetry and the $H$-invariance, but not the Jacobi identity. Then every quadratic element $Q\in\mathcal X^2(\mathcal A)$ which is $H$-invariant satisfies $D_2(Q)=0$.
\end{rem}
\begin{cor}
The Poisson algebras $q_{n, k}(\mathcal E)$ are unimodular.
\end{cor}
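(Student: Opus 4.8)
The plan is to obtain this immediately from the Theorem above, once we check that $q_{n,k}(\mathcal E)$ lies in the class to which the Theorem applies, namely the $H$-invariant \emph{quadratic} Poisson structures. Two points must be verified. First, $H$-invariance: this is exactly the content of the Example preceding the Theorem, and it follows from property (5) of the list in Subsection 1.2 — the maps $x_i\mapsto x_{i+1}$ and $x_i\mapsto\varepsilon^i x_i$ are automorphisms of $Q_{n,k}(\mathcal E,\eta)$ compatible with the grading, hence they descend to Poisson automorphisms of the quasi-classical limit $q_{n,k}(\mathcal E)$; in the notation of Section 2 these are precisely $\sigma$ and $\tau$ (up to the obvious relabeling of the basis $x_0,\dots,x_{n-1}$). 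Second, quadraticity: $Q_{n,k}(\mathcal E,\eta)$ is by construction an algebra with quadratic defining relations which, by property (6), is a flat deformation of the commutative polynomial ring; therefore the commutators $[x_i,x_j]$ are $\hbar$ times quadratic expressions plus higher order in $\hbar$, and the first-order term — the Poisson bracket $q_{n,k}(\mathcal E)$ — is a biderivation whose values $\{x_i,x_j\}$ on the generators are homogeneous polynomials of degree $2$. Thus $q_{n,k}(\mathcal E)$ is an $H$-invariant quadratic Poisson structure, and the Theorem (via Proposition \ref{hqc}) yields unimodularity.

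In fact, by the Remark following the Theorem one obtains the sharper conclusion that $D_2(\pi)=0$ on the nose for $\pi=q_{n,k}(\mathcal E)$: the modular vector field itself — not merely its class in $PH^1$ — vanishes. This refines the remark in Subsection 1.2 (property (7)) that among the $q_{n,k}$ only $q_3(\mathcal E,\eta)$ and $q_4(\mathcal E,\eta)$ are Jacobian Poisson structures: although all JPS are unimodular by \cite{khi1,khi2}, for the remaining $q_{n,k}$ unimodularity is instead forced by the Heisenberg symmetry alone, with no appeal to the Jacobi identity.

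I do not expect a genuine obstacle. What \emph{cannot} be done is a brute-force verification of $\sum_i \partial\{x_0,x_i\}/\partial x_i=0$ from an explicit formula for the bracket, since — as noted right after the list of properties — the explicit form of $q_{n,k}$ is unknown in generic dimension; this is exactly why the structural route through the Theorem is the appropriate one. The only point deserving a line of care is to confirm that the two generators of $H_n$ act on the Poisson limit precisely as the operators $\sigma,\tau$ of Section 2, so that the hypothesis of Proposition \ref{hqc} is literally met; this is immediate from property (5).
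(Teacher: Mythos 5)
Your proposal is correct and follows exactly the route the paper intends: the corollary is an immediate consequence of the Example (the $q_{n,k}(\mathcal E)$ are $H$-invariant, via property (5) of Subsection 1.2) together with the Theorem on unimodularity of $H$-invariant quadratic Poisson structures. Your additional observations --- that quadraticity comes from the quasi-classical limit of the quadratic relations, and that the Remark even gives $D_2(\pi)=0$ identically --- are accurate and consistent with the paper.
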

\subsection{$H$-invariant Poisson structure and Calabi-Yau algebras}
First let us remind the notion of Calabi-Yau (CY) algebras. These algebras plays the role of noncommutative
analogues of coordinate rings of CY manifolds.
\begin{definition} \cite{ginz}
An algebra $A$ is said to be a Calabi-Yau algebra of dimension $d$ if it is homologically smooth and
\begin{equation}
HH^k(A, A\otimes A)=\left\{ \begin{array}{l}
                      A \ \ \ if \ \ \ k=d \\
                      0 \ \ \ if \ \ \ k\neq d.
                   \end{array}
                   \right.
\end{equation}
\end{definition}
There are many ``polynomial-like'' algebras have been proved to be Calabi-Yau. Among them there are the polynomial rings $\mathbb{C}[x_1,\dots,x_n]$, the coordinate ring $\mathbb{C}[X]$ where $X$ is an affine smooth variety with
trivial canonical bundle, the Weyl algebra $\mathcal{A}_n=\mathbb{C}[\overline{x},\overline{\partial}]$,
where $\overline{x}=(x_1,\dots,x_n)$,
$\overline{\partial}=\left(\frac{\partial}{\partial x_1},\dots,\frac{\partial}{\partial x_n} \right)$, etc. .\\
It is well known that the Sklyanin algebras $Q_{3,1}(\mathcal E, \eta)$ and $Q_{4,1}(\mathcal E, \eta)$ are CY algebras since they are Kozsul algebras and their dual algebras are Frobenius algebras \cite{smsf,vanden}.\\
Recently, Dolgushev proved \cite{dol} some relation between CY algebras and the notion of unimodularity. Let $X$ be a smooth affine variety over $\mathbb C$ with a trivial canonical class and with a Poisson structure $\pi_1$ on its algebra of smooth functions. Following M. Kontsevich \cite{kon}, there is a bijection between the set of equivalence classes of star-products $[\star_h]$ of $\pi_1$ and the set of equivalence classes $[\pi_h]$ of Poisson tensors
$$\pi_h=0+\pi_1h+\cdots+\pi_nh^n+\cdots$$
on the commutative $\mathbb C[[h]]$-algebra $A[[h]].$\\
Dolgushev proved \cite{dol} that $(A[[h]], \star_h)$ is a CY algebra if and only if $\pi_h$ is unimodular Poisson structure. It is obvious to observe that the canonical quantization (those which the star-product equivalence class corresponds to the equivalence class of a Poisson tensor $\pi_h=0+\pi_1h$)  is a CY algebra if $\pi_1$ is unimodular Poisson structure.
\begin{cor}
Every canonical quantization of $H$-invariant quadratic Poisson structure is a CY algebra.
\end{cor}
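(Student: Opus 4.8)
The plan is to deduce the Corollary by feeding the Theorem of Section~3 into Dolgushev's criterion recalled just above. First I would check that the ambient variety meets Dolgushev's hypotheses: the affine space $X=\mathbb{C}^n$ with coordinate ring $A=\mathbb{C}[x_0,\dots,x_{n-1}]$ is smooth, and its canonical class is trivial since $\Omega^n(A)$ is free of rank one, generated by $dx_0\wedge\cdots\wedge dx_{n-1}$ (this is exactly what makes the operator $\star$ and hence $D_\bullet$ of Section~3 globally defined isomorphisms). Moreover $A[[h]]$, being a flat formal deformation of a regular ring, is homologically smooth, so the only nontrivial condition to verify is unimodularity.

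Second, let $\pi_1$ be an arbitrary $H$-invariant quadratic Poisson tensor on $A$. By the Theorem of Section~3 — and by the Remark following it, which stresses that only antisymmetry and $H$-invariance enter the argument — we have $D_2(\pi_1)=0$, so the modular class of $\pi_1$ in $PH^1(A,\pi_1)$ vanishes. Third, I would invoke Kontsevich's formality: the canonical quantization $\star_h$ is by definition the star-product whose equivalence class corresponds to the formal Poisson tensor $\pi_h=0+\pi_1 h$. Since $\pi_1$ is a genuine Poisson bivector, $\pi_h=\pi_1 h$ is a genuine formal Poisson structure on $A[[h]]$, and its modular class is $h$ times that of $\pi_1$, hence zero; thus $\pi_h$ is unimodular. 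Dolgushev's equivalence (unimodular $\Longleftrightarrow$ Calabi--Yau) then gives at once that $(A[[h]],\star_h)$ is a Calabi--Yau algebra of dimension $n$, which is the assertion.

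I do not expect a genuine obstacle here: the geometric hypotheses are automatic for $\mathbb{C}^n$, the passage from $\pi_1$ to $\pi_h=\pi_1 h$ preserves unimodularity trivially, and all the real content has already been extracted in the Theorem of Section~3, namely the vanishing of $D_2$ on an $H$-invariant quadratic bivector. The one point that deserves a sentence of care is purely a matter of convention: one must read \emph{canonical quantization} in Dolgushev's sense, i.e.\ the star-product attached to the pure first-order formal Poisson parameter $\pi_h=\pi_1 h$, so that the criterion applies verbatim; with that normalization fixed the Corollary is an immediate specialization of his theorem, and it also recovers, for $q_{n,k}(\mathcal E)$, the Calabi--Yau property of the quantizations $Q_{n,k}(\mathcal E,\eta)$ beyond the classical cases $n=3,4$.
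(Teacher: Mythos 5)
Your proposal is correct and follows exactly the route the paper intends: the Theorem of Section~3 gives unimodularity of any $H$-invariant quadratic Poisson tensor $\pi_1$, the canonical formal Poisson structure $\pi_h=\pi_1 h$ on $\mathbb{C}[x_0,\dots,x_{n-1}][[h]]$ is then unimodular, and Dolgushev's equivalence yields the Calabi--Yau property. The extra details you supply (triviality of the canonical class of $\mathbb{C}^n$, homological smoothness of the deformation) are left implicit in the paper but are the right hypotheses to check.
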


In particular, every canonical quantization of an elliptic Sklyanin-Odesskii-Feigin Poisson algebra is a CY algebra.

\section{Classification of the H-invariant quadratic Poisson tensors until the dimension 6}
The generic entry of an $H$-invariant quadratic antisymmetric matrix is
\begin{equation}
 P_{ij}=\sum_{l=0}^{n-1}  p^{l}_{ij} x_{l}x_{i+j-l},
\end{equation}
with the condition $p_{i, j}^i+p_{i, j}^j=p_{i+1, j+1}^{i+1}+p_{i+1, j+1}^{j+1}.$ \\
We impose the Jacobi condition on the coefficients in lower dimensional cases.
\subsubsection*{dimension 3}
The first nontrivial case is $n=3$ when the generic structure of an $H$-invariant antisymmetric matrix is
\begin{equation}
\left(
\begin{array}{ccc}
 0       & P_1^0    & -P_1^2    \\
- P_1^0  &    0     & P_1^1   \\
  P_1^2 & - P_1^1  &    0
\end{array}
\right)
 \end{equation}
where $P_1^0=A_1x_0x_1+A_2x_{2}^2$. For every couple of coefficients $A_1,A_2$ the previous matrix respect the Jacobi
identities and they correspond exactly to the $q_{3,1}$ Poisson tensor related to the classical counterpart of the $Q_{3,1}$ Artin-Schelter -Tate algebras.
It is worst to remark that the solutions with $A_1 = 1, A_2 = 0$ and $A_1 = 0, A_2 =1$ correspond to the degenerations of the underlying elliptic
curve. The corresponding Poisson algebras are usually called "skew-polynomial" (A.Odesskii)
\subsubsection*{dimension 4}
 In dimension four, the matrix (\ref{H-P}) have the form:
\begin{equation}
\left(
\begin{array}{ccccc}
 0       & P_1^0    & P_2^0  & - P_1^3   \\
- P_1^0  &    0     & P_1^1  &  P_2^1   \\
  P_2^2 & - P_1^1  &    0   &  P_1^2     \\
 P_1^3 &  P_2^3     & -P_1^2 &0
\end{array}
\right)
 \end{equation}
where $P_1^0=A_1x_0x_1+A_2x_2x_3$ and $P_2^0=B_1x_0x_2+B_2x_{3}^2+B_3x_{1}^2$.
We see that in the even case the antisymmetry impose that $P_2^0=-P_2^2$ and $P_2^1=-P_2^3$ corresponding to the conditions
$$
B_1x_0x_2+B_2x_{3}^2+B_3x_{1}^2=-B_1x_2x_0-B_2x_{1}^2-B_3x_{3}^2
$$
and
$$
B_1x_1x_3+B_2x_{0}^2+B_3x_{2}^2=-B_1x_3x_1-B_2x_{2}^2-B_3x_{0}^2
$$
implying $B_1=0$, $B=-B_2=B_3$. Therefore $P_2^0=B(-x_{1}^2+x_{3}^2)$.\\
The Jacobi identities imply $A_1A_2=B^2$. In this case also the obtained Poisson tensors are the $q_{4,1}$ elliptic Poisson algebras with the normalization $A_1=1,A_2=k^2,B=k$.
\subsubsection*{dimension 5}
It is clear that Odesskii-Feigin Poisson algebras should verify the $H$-invariance condition. We observe in the previous subsections that the $H$-invariance quadratic Poisson structure coincides with Odesskii-Feigin algebras in dimension 3 and 4 or with their degenerations. We will show in this subsection that for the dimension $5$, apart of two elliptic Poisson structures, there exist also a cluster of skew-polynomial algebras and among them there $12$ $H$-invariant Poisson quadratic structures, which aris for some degeneration of the base curve in so-called "cuspidal points". But, of course, these "additional" structures do not extend the class of $H$-invariant quadratic Poisson algebras.

We will describe this degenerations in details in the paper \cite{ORT2}.

The generic form of an antisymmetric $H$-invariant quadratic matrix is:
\begin{equation}
\left(
\begin{array}{ccccc}
 0       & P_1^0    & P_2^0  & - P_2^3  & -P_1^4  \\
- P_1^0  &    0     & P_1^1  &  P_2^1  &  - P_2^4 \\
  -P_2^0 & - P_1^1  &    0   &  P_1^2  & P_2^2    \\
 P_2^3 &  -P_2^1     & -P_1^2 &0 &      P_1^3  \\
 P_1^4 & P_2^4 & -P_2^2 & -P_1^3 &0
\end{array}
\right)
 \end{equation}
where $$P_1^k=A_1 x_{k}x_{1+k} +A_2 x_{2+k}x_{4+k} +A_3 x_{{3+k}}^2$$ and
$$P_2^{k}=B_1 x_{k}x_{2+k}+B_2 x_{3+k}x_{4+k} +B_3 x_{{1+k}}^2.$$ The constants $A_i$ and $B_i$ should be determined by
the Jacobi identities. For a $5d$ tensor we have in general 10 independent equations from Jacobi, however the $H$-invariance reduce only to 2 independent, i.e. $\sum_{l \in \mathbb{Z}_5} P_{0l}\partial_lP_{12}+cyc(i,j,k)=0$ and
$\sum_{l \in \mathbb{Z}_5} P_{0l}\partial_lP_{13}+cyc(i,j,k)=0$. These PDE's can be reduced to conditions on the coefficients of cubic polynomial in $x_i$ which are equivalent to the system of quadratic equations.
\footnote{During an interesting discussion with Prof. Odesskii, we learned that he independently obtained the same result, but it is still unpublished.}
\begin{equation}
 \label{Jac5}
\begin{array}{rl}
{B_{{2}}}^{2}+3\,A_{{1}}A_{{3}}+B_{{1}}A_{{3}}+A_{{2}}B_{{3}}&=0\\
2\,{A_{{3}}}^{2}-2\,A_{{2}}B_{{1}}-A_{{1}}A_{{2}}+B_{{2}}B_{{3}}&=0\\
-{A_{{2}}}^{2}-3\,B_{{1}}B_{{3}}+A_{{1}}B_{{3}}+B_{{2}}A_{{3}}&=0\\
-2\,{B_{{3}}}^{2}-2\,B_{{2}}A_{{1}}+B_{{1}}B_{{2}}-A_{{2}}A_{{3}}&=0
\end{array}
\end{equation}
First of all the Poisson algebras $q_{5,1}$ and $q_{5,2}$ are solutions of this system with coefficients
\begin{equation}
\begin{array}{r}
 A_1= -\frac{3}{5}\lambda^2+\frac{1}{5\lambda^3} \quad A_2=-\frac{2}{\lambda} \quad A_3=\frac{1}{\lambda^2}\\ \\
 B_1= -\frac{1}{5}\lambda^2-\frac{3}{5\lambda^3} \quad B_2= 2 \quad B_3= \lambda
\end{array}
\end{equation}
for $q_{5,1}$ and
\begin{equation}
\begin{array}{r}
 A_1=\frac{2}{5}\lambda^2+\frac{1}{5\lambda^3} \quad A_2= \lambda \quad A_3=-\frac{1}{\lambda}\\ \\
 B_1=-\frac{1}{5}\lambda^2+\frac{2}{5\lambda^3} \quad B_2=- \frac{1}{\lambda^2} \quad B_3= 1
\end{array}
\end{equation}
for $q_{5,2}$.
In the forthcoming paper \cite{ORT2} we will study carefully this couple of Poisson tensors and we will show that they are related by a birational  homomorphism.

Apart of this two strata there is a cluster of solutions of skew-polynomial type which we can obtain $via$ direct computation using the algebraic manipulator Maple. It is not interesting, in our opinion, to write down explicitly all the classes but we analyze some interesting cases.
\begin{enumerate}
\item
An interesting solution is the linear solution
\begin{equation}
\begin{array}{r}
 A_1=-\frac{\lambda}{2}+1 \quad A_2=\lambda \quad A_3=-\frac{\lambda}{2}-1\\ \\
 B_1=\frac{\lambda}{2}+1\quad B_2=-2 \quad B_3=-\frac{\lambda}{2}+1
\end{array}
\end{equation}
This Poisson structure does not generates any integrable system because the Casimir of the structure is independent on $\lambda$.
\item We can extend the parameter $\lambda\in \mathbb C^*$ and consider it like a point in $\mathbb P^1: \lambda = (p:q)$.
The points $\lambda$ parametrises the base elliptic curve considering like a schematic intersection of five Klein quadrics in $\mathbb P^4$
(\cite{Hul}).
Consider the cases when $p=1, q=0$ and $p=0, q=1$. The corresponding skew polynomial Poisson brackets are associated with two pentagonal
configurations in $\mathbb P^4$ (degenerations of the schematic intersections of Klein quadrics).
\item There are also 10 similar "cuspidal" points in $\mathbb P^1$ (they are solutions of the equation $\lambda^10 + 11\lambda^5 - 1 = 0.)$
\end{enumerate}

\subsubsection*{Explicit form of $q_{5,k=1,2}$}
The explicit formulas for Sklyanin-Odesskii-Feigin Poisson brackets with $n=5$ can be extracted from
\cite{kiev} and \cite{odru}:

For the tensor $q_{5,1}$ the corresponding Poisson brackets are
\begin{equation}
 \begin{split}
 \{x_i,x_{i+1}\}_{5,1}&= \left(- \frac{3}{5}\,{\lambda}^{2}+\frac{1}{5{\lambda}^{3}} \right) x_{{i}}x_{{i+1}}-2\,{\frac {x_{{i+4}}x_{{i+2}}}{\lambda}}+{\frac {{x_{{i+3}}}^{2}}{{\lambda}^{2}}}\\
\{x_i,x_{i+2}\}_{5,1}&= \left( -\frac{1}{5}\,{\lambda}^{2}-\frac{3}{5{\lambda}^{3}} \right) x_{{i+2}}x_{{i}}+2\,x_{{i+3}}x_{{i+4}}-\lambda\,{x_{{i+1}}}^{2}
 \end{split}
\end{equation}
where $i\in \mathbb Z_5$.\\
For the tensor $q_{5,2}$ the bracket are
\begin{equation}
 \begin{split}
 \{y_i,y_{i+1}\}_{5,2}&= \left( \frac{2}{5}\,{\lambda}^{2}+\frac{1}{5{\lambda}^{3}} \right) y_{{i}}y_{{i+1}}+\lambda y_{{i+4}}y_{{i+2}}-{\frac {{y_{{i+3}}}^{2}}{{\lambda}}}\\
\{y_i,y_{i+2}\}_{5,2}&= \left( -\frac{1}{5}\,{\lambda}^{2}+\frac{2}{5{\lambda}^{3}} \right) y_{{i+2}}y_{{i}}-\frac{y_{{i+3}}y_{{i+4}}}{\lambda^2}+{y_{{i+1}}}^{2}
 \end{split}
\end{equation}
where $i\in \mathbb Z_5$.

In $5d$ the Poisson structure admits always only one Casimir. The $H$-invariance allows us to write it explicitly:
\begin{equation}
\label{Casgen5}
\begin{array}{rl}
K_5=&c_5 \left( {x_0}^{5}+{x_{{1}}}^{5}+{x_2}^{5}+{x_3}^{5
}+{x_4}^{5} \right)  \\
&+ c_4 \left( {x_0
}^{3}x_1x_3+{x_1}^{3}x_0x_2+{x_2}^{3}x_1x_
{{4}}+{x_3}^{3}x_2x_4+{x_4}^{3}x_0x_3 \right) \\
&+ c_3  \left( {x_0}^{3}x_2x_{{4}
}+{x_1}^{3}x_3x_4+{x_2}^{3}x_0x_4+{x_3}^{3
}x_1x_0+{x_4}^{3}x_2x_1 \right)  \\
&+  c_2 \left( x_0{x_1}^{2}{x_4}^{2}+x_1{x
_{{3}}}^{2}{x_0}^{2}+x_2{x_3}^{2}{x_1}^{2}+x_3{x_2}^{
2}{x_4}^{2}+x_4{x_0}^{2}{x_3}^{2} \right) \\
&+  c_1 \left( x_0{x_2}^{2}{x_{{4
}}}^{2}+x_1{x_3}^{2}{x_4}^{2}+x_2{x_4}^{2}{x_0}^{2}
+x_3{x_1}^{2}{x_0}^{2}+x_4{x_2}^{2}{x_1}^{2} \right)  \\
&+  c_0 x_0x_1 x_2x_3x_4
\end{array}
\end{equation}
where
\begin{equation}
\begin{array}{l}
 c_5= -\frac{1}{5} A_3 B_3 \quad  c_4= A_1 A_3 \quad  c_3= -B_1B_3 \quad
c_2= \frac{1}{2}A_1A_2 - \frac{1}{2}B_2 B_3 \\ \\ \quad  c_1=\frac{1}{2}A_2A_3 - \frac{1}{2}B_1 B_2  \quad
c_0= A_{1}^2 -B_{1}^2 -A_1B_1 -A_2B_2\\
\end{array}
\end{equation}
and the constants $A_i$ and $B_i$ satisfy (\ref{Jac5}).\\
The $H$-invariance implies also the following interesting property
\begin{equation}
\label{qr5}
 P_{ij}P_{kl}+P_{ki}P_{jl}+P_{jk}P_{il}=-2\,{\frac {B_{{2}}A_{{1}}+{B_{{3}}}^{2}}{A_{{2}}A_{{3}}-B_{{1}}B_{{2}
}}}\frac{\partial K_{5}}{\partial x_m},
\end{equation}
where $(i, j, k, l, m)$ is a cyclic permutation of $( 0, 1, 2, 3, 4 )$.\\
We could interpret this relation as a system of $5$ quadric for $6$ $P_{ij}$ entries for a fixed polynomial $K_5$.
We see, therefore, that contrarily to what happens in dimensions $3$ and $4$, that the Casimir do not encodes all the information necessary to the reconstruction of the Poisson tensor. This is related to the fact that the $5d$ is the smallest dimension for what the elliptic Poisson tensors are not of Jacobi type (see \cite{odru}).
\begin{rem}
 The relation (\ref{qr5}) in $5$ dimensions  is a generalization to $H$-invariant Poisson algebras of an analogue of the formula present in \cite{odru} for $q_{n,k}$. In this paper the authors consider this relation in their Theorem 3.1 stating that for any Poisson algebra of dimension $d$ with regular structure of symplectic leaves the following relation holds:
\begin{equation}
\label{ORthm}
 \Lambda^{\frac{d-l}{2}} \pi  = \star^{-1}\left( \Lambda_{i=1 \dots l} dQ_i \right).
\end{equation}
 Here $\pi$ is the Poisson structure and $\{Q_i\}_{i=1 \dots l}$ are the Casimir functions. This theorem could be useful for proving, in more geometric way, the unimodularity of the $q_{n,k}$ structures. Actually the formula (\ref{ORthm}) implies that  $\star^{-1} \circ d \circ \star \left( \Lambda^{\frac{d-l}{2}} \pi \right) = 0 $.
\end{rem}

\subsubsection*{dimension 6}
 The generic form of an antisymmetric $H$-invariant quadratic matrix is
\begin{equation}
\label{6Hinv}
\left(
\begin{array}{cccccc}
 0       & P_1^0    & P_2^0  & P_3^0 &- P_2^4  & -P_1^5  \\
- P_1^0  &    0     & P_1^1  &   P_2^1 & P_3^1 &  - P_2^5 \\
  -P_2^0 & - P_1^1  &    0   &  P_1^2  & P_2^2    & P_3^2\\
 P_3^3 &  -P_2^1     & -P_1^2 &0 &      P_1^3  &P_2^3\\
 P_2^4 & P_3^4 & -P_2^2 & -P_1^3 &0 & P_1^4 \\
P_1^5 & P_2^5 & P_3^5 & -P_1^3 & -P_1^4 &0
\end{array}
\right)
 \end{equation}
where
$$
\begin{array}{l}
P_1^k=A_1 x_{k}x_{1+k} +A_2 x_{2+k}x_{5+k} +A_3 x_{{3+k}}x_{{4+k}} \\
P_2^{k}=B_1 x_{k}x_{2+k}+B_2 x_{3+k}x_{5+k} +B_3 x_{{1+k}}^2+B_4 x_{{4+k}}^2 \\
P_3^k=C_1 x_{k}x_{3+k} +C_2 x_{4+k}x_{5+k} +C_3 x_{{2+k}}x_{{1+k}}.
\end{array}
$$
Because we are analyzing an even dimensional case the antisymmetry impose some constraint on $P_3^k$. From (\ref{6Hinv})
we see that we have to impose $P_3^k=-P_3^{k+3}$ equivalent to $C_1=0$ and $C=C_3=-C_2$. Therefore
$P_3^k=C(x_{{2+k}}x_{{1+k}}- x_{4+k}x_{5+k})$.\\
The cubic Jacobi relations are equivalent to the quadric relations on the coefficients
\begin{equation}
 \label{Jac6}
\begin{array}{rl}
{B_{{2}}}^{2}+CA_{{2}}-{A_{{3}}}^{2}&=0 \\CB_{{2}}-2\,B_{{3}}B_{{4
}}-A_{{2}}B_{{3}}&=0 \\A_{{2}}A_{{1}}-B_{{4}}A_{{3}}+B_{{1}}A_{{2}}-B_{{2}}
B_{{3}}&=0 \\CB_{{4}}-B_{{1}}A_{{3}}-A_{{1}}A_{{3}}&=0 \\CB_{{3}}+B_{{2}}B_{{1}}
+A_{{1}}B_{{2}}&=0 \\-2\,{B_{{3}}}^{2}+2\,CA_{{1}}-CB_{{1}}-B_{{4}}A_{{2}}&=0 \\
-B_{{2}}B_{{4}}-A_{{3}}B_{{3}}&=0 \\-{A_{{2}}}^{2}-2\,CB_{{1}}-2\,B_{{4}}A_
{{2}}-CA_{{3}}+CA_{{1}}&=0 \\-{C}^{2}-2\,B_{{1}}A_{{2}}+2\,B_{{4}}A_{{3}}-2
\,B_{{2}}B_{{3}}-A_{{2}}A_{{3}}+A_{{2}}A_{{1}}&=0 \\B_{{1}}A_{{2}}-B_{{4}}A
_{{3}}-B_{{4}}A_{{1}}&=0 \\B_{{2}}B_{{4}}-2\,B_{{3}}B_{{1}}+A_{{1}}B_{{3}}-
A_{{2}}B_{{2}}&=0
\end{array}
\end{equation}
The system (\ref{Jac6}) admits many classes of solutions
which can be computed using an algebraic manipulator (Maple in our case) similarly to  $5$ dimensional case. The algebra $q_{6,1}$ is obviously one of the solutions. Another interesting solution is obtained remarking that the $H_6$ Heisenberg group contains two copies of $H_3$. This implies that a particular solution of the system (\ref{Jac6}) is given by the direct sum of two copies of the $q_{3,1}$  Poisson algebra. This solution is $C=A_1=A_2=A_3=B_2=B_3=0$ and $B_1$, $B_4$ free coefficients. In this case the Poisson algebra is a direct sum of a copy of $q_{3,1}$ in the variables $x_0,x_2,x_4$ and a copy of $q_{3,1}$ in the variables $x_1,x_3,x_5$.

A. Odesskii (private communication) has conjectured that all the solutions of the system (\ref{Jac6}), except the previous examples, are some skew-polynomial degenerations of $q_{6,1}(\cal E)$.

\begin{rem}
The case of $d=6$ generators is related to some integrable systems.
In \cite{odru1} authors had constructed a simple model of a (quantum) elliptic integrable system in $Q_{2k,1}(\cal E), k\geq 2$. They had conjectured that this system is a quantization of the classical integrable system which arises
via the Magri-Lenard scheme from the bi-Hamiltonian structure on $q_{2k,1}(\cal E)$ described in \cite{ode3}.
Roughly speaking, the integrable system has as the phase
space a $2k$ dimensional component of the moduli space of parabolic rank two
bundles on the given elliptic curve $\cal E$, or, more precisely, the coordinate ring of
the open dense part of the latter has a structure of a quadratic Poisson algebra
isomorphic to $q_{2k,1}(\cal E)$. The quantum commuting
elements from the construction of \cite{odru1} are the same as the latter obtained from the
Lenard-Magri scheme for $k = 3$.
The algebra $q_{6,1}(\cal E)$ is the first non-trivial case when this conjecture was verified by a direct computation.
It would be interesting to find if there are some degenerations (defined on the solutions of (\ref{Jac6})) of this integrable system and to clarify their intrinsic meaning. Remind that the elliptic integrable system related to
$q_{6,1}(\cal E)$ was interpreted in \cite{odru1} as a Beauville-Mukai system on the (desingularized) part of
the symmetric product $(Cone(\cal E))^{(3)}$.
\end{rem}

\section{Conclusions}

In this paper we study the structure of the $H$-invariant polynomial Poisson tensors mainly in the quadratic case, and we classify them until the order $6$. We show that our construction generalize the Odesskii-Feigin Poisson tensors $q_{n,k}$ and that all the $H$-invariant quadratic bi-vectors are unimodular.

It is interesting to concentrate our attention and future research to the $5d$ case. There many different reasons to be interested in this particular case:
\begin{itemize}
\item It is the first case when there are two different Poisson tensors generated by the Odesskii-Feigin construction;

\item The family of quadrics in $\mathbb P^4$ which defines the underlying normal elliptic curve is only a \emph{local} complete intersections;

\item The bi-Hamiltonian properties of $q_{5,2}(\mathcal E)$  are still obscure while the algebra $q_{5,1}(\mathcal E)$ is in fact \emph{tri-Hamiltonian} as it was shown by A. Odesskii in \cite{ode2};

\item We are still do not know how to relate to these algebras (as well as to all other odd-dimensional algebras for $n>3$) an integrable system.

\item The projective geometry associated with with $n=5$ (together with the case $n=7$) cases will give probably good ideas and hints how to treat the general algebras with any \emph{prime} number $p$ of generators.

\end{itemize}

We will  display in the future paper \cite{ORT2} a particular (classical quadro-cubic) Cremona transformations in $\mathbb P^4$ between two different elliptic Sklyanin-Odesskii-Feigin Poisson structures with $5$ generators.\\
We will give a strong evidence for the existence of a similar Cremona description for morphisms between elliptic Sklyanin and Poisson algebras $q_{7,k}(\mathcal E),\ k = 1, 2, 3$ based on the results of \cite{HulKatzS}

We are going to study the structure and the properties of the integrable systems related to elliptic Poisson structures. It could be interesting to extend the algebraic construction of integrable systems on $\mathbb C^{2k}$ for $k\geq 3$ introduced in \cite{odru1} to odd-dimensional case and to relate it to the bi-Hamiltonian construction of  \cite{ode2}. Next we would like to understand the geometrical meaning of unimodularity for the integrable flows related to the elliptic bi-Hamiltonian Poisson structures.

\subsubsection*{Acknowledgements}
During the work the authors benefit of many useful discussions and suggestions, in particular with Marco Pedroni on the Poisson geometry and with Alexander Odesskii on the elliptic algebras.

G.O. thanks  MISGAM project (exchange grants no 1788 and 2760) for the financial support, the Mathematics Departments of the University of Angers for the kind hospitality, and Andrea Previtali and Stefano Meda for the some explanation about the discrete Heisenberg group. S.P. is thankful to University of Milan Bicocca and in particular to Gregorio Falqui for the invitation and the kind hospitality.  S.P. has been partially financed by ``Fond National de Recherche (Luxembourg)''. V.R. benefited a lot of numerous conversations with Boris Feigin, Alexei Gorodentsev and Sasha Polishchuk.

Part of this work was presented by V.R. on International Conferences
in Italy (Vietry, 06/09 and Trieste, 07/09), in Russia (Dubna, 08/09), Ukraine (Kiev-Odessa, 05/09) and in France (Lyon 11/09). His work and visits were partially supported by
MISGAM and "EINSTEIN" (RFBR-09-01-92440-KE-) programs, by PICS-4769 CNRS-Ukraine, French-Russian Project in Theoretical Physics (GDRI- 471), by RFBR grants 08-01-00667 and 09-02-90493-Ukr-f. He greatly acknowledges the hospitality of SISSA (Trieste) and Universities of Genova and Milan (Bicocca) where a part of this work was done. He acknowledges also a warm hospitality of LPTM (Universit\'e Cergy-Pontoise) where the paper was finished during his CNRS delegation.

\noindent


\begin{thebibliography}{99}



\bibitem{arsc} Artin, M.; Schelter, W. F. ; Graded algebras of global dimension $3$. Adv. in Math. 66 (1987), no. 2, 171--216.

\bibitem{ATvdB} Tate, J.; Van den Bergh, M.;  Tate, Homological properties of Sklyanin algebras.  Invent. Math.  124  (1996),  no. 1-3, 619--647.

\bibitem{Bel} Belavin, A. A.; Discrete groups and integrability of quantum systems. (Russian)
Funktsional. Anal. i Prilozhen.  14  (1980),  no. 4, 18--26, 95.
English version in Functional Analysis and Its Applications (1981) Volume 14, Number 4, 260-267

\bibitem{BrZuk} Brylinsky, J.-L.; Zuckerman, G.; The outer derivation of a complex Poisson manifold.  J. Reine Angew. Math.  506  (1999), 181--189.


\bibitem{dol} Dolgushev, V.A.;  The Van den Bergh duality and the modular symmetry of a Poisson variety, accepted to Selecta Math.; arXiv:math/0612288.


\bibitem{dri} Drinfel'd, V. G.; Hamiltonian structures on Lie groups, Lie bialgebras and the geometric meaning of classical Yang-Baxter equations. Dokl. Akad. Nauk SSSR 268 (1983), no. 2, 285--287.

\bibitem{odfe1} Fe{\u\i}gin, B. L.; Odesski{\u\i}, A. V.; Vector bundles on an elliptic curve and {S}klyanin algebras. Topics in quantum groups and finite-type invariants. Amer. Math. Soc. Transl. Ser. 2, 185, 65--84 Amer. Math. Soc.Providence, RI (1998).

\bibitem{kiev} Fe{\u\i}gin, B. L.; Odesski{\u\i}, A. V.; Vector bundles on an elliptic curve and {S}klyanin algebras,
(Russian), prepr. BITP, Kiev, (1988).


\bibitem{ginz} Ginzburg, V. A.; Calabi-Yau algebras, arxiv:mth.AG/0612139.

\bibitem{gin1} Ginzburg, V. L.; Grothendieck groups of Poisson vector bundles. J. Symplectic Geom. 1 (2001), no. 1, 121--169.

\bibitem{gin} Ginzburg, V. L.; Weinstein, A.; Lie-Poisson structure on some Poisson Lie groups. J. Amer. Math. Soc. 5 (1992), no. 2, 445--453.

\bibitem{ELW} Evens, S.; Lu, J-H.; Weinstein, A.; Transverse measures, the modular class and a cohomology pairing for Lie algebroids.  Quart. J. Math. Oxford Ser. (2)  50  (1999),  no. 200, 417--436.

\bibitem{HulKatzS} Hulek, K.; Katz, S.; Schreyer, F.-O.;  Cremona transformations and syzygies, Math. Z., \textbf{209},(1992), 419-443.

\bibitem{Hul} Hulek, K.; Projective geometry of elliptic curves, Ast\'erisque, 137 (1986), 143pp.
 
\bibitem{khi1}{Khimshiashvili, G.}; {On one class of exact {P}oisson structures}, {Proc. A. Razmadze Math. Inst.}, {Georgian Academy of Sciences. Proceedings of A. Razmadze Mathematical Institute}, {119}, {1999}, {111--120}.

\bibitem{khi2} {Khimshiashvili, G.; Przybysz, R.}; {On generalized {S}klyanin algebras}, {Georgian Math. J.}, {Georgian Mathematical Journal}, {7}, {2000}, {4}, {689--700}.

\bibitem{kon} Kontsevich, M.; {Deformation quantization of {P}oisson manifolds}, {Lett. Math. Phys.}, {Letters in Mathematical Physics} {66}, {2003}, {3}, {157--216.}


\bibitem{lic} Lichnerowicz, A.; Les vari\'et\'es de Poisson et leurs alg\`ebres de Lie associe\'ees. (French) J. Differential Geometry 12 (1977), no. 2, 253--300.


\bibitem{ode1}Odesski\u\i, A. V.; Fe\u\i gin, B. L.; Sklyanin's elliptic algebras. (Russian) Funktsional. Anal. i Prilozhen. 23 (1989), no. 3, 45--54, 96; translation in Funct. Anal. Appl. 23 (1989), no. 3, 207--214.

\bibitem{odru} Odesski\u\i, A. V.; Rubtsov, V. N.; Polynomial Poisson algebras with a regular structure of symplectic leaves. (Russian) Teoret. Mat. Fiz. 133 (2002), no. 1, 3--23.

\bibitem{odru1} Odesski\u\i, A. V.; Rubtsov, V. N.; Integrable systems associated with elliptic algebras, in "Quantum groups",  81--105, IRMA Lect. Math. Theor. Phys., 12, Eur. Math. Soc., Zu"rich, 2008.

\bibitem{ode2} Odesskii, A.; {Bihamiltonian elliptic structures}, {Mosc. Math. J.}, {Moscow Mathematical Journal}, {4}, {2004}, {4}, {941--946, 982}.

\bibitem{ode3} Odesskii, A., Elliptic algebras, Russian Math. Surveys, \textbf{}, (2005).

\bibitem{ORT2} Ortenzi, G., Rubtsov, V., Tagne Pelap, Serge Rom\'eo; Elliptic Sklyanin Algebras and projective geometry of elliptic curves in $d=5$ and $d=7$, (paper in progress).

\bibitem{pic} Pichereau, A.; Poisson (co)homology and isolated singularities. J. Algebra 299 (2006), no. 2, 747--777.

\bibitem{polys} {Polyschshuk A.}, Poisson structures and birational morphisms associated with bundles on elliptic curves, IMRN (Int. Math. Res. Notes), \textbf{13},(1998), 683-703.

\bibitem{Semple} Semple, J.G.; Cremona transformations of space of four dimensions by means of quadrics and the reverse transformations, Phil. Trans. Royal Soc. London, Series A, \textbf{228} (1929), 331-376.

\bibitem{SemRoth} Semple, J.G.; Roth, Projective algebraic geometry, (\textbf{1986}), Oxford University Press.

\bibitem{sky1} Sklyanin, E. K.; Some algebraic structures connected with the Yang-Baxter equation. (Russian) Funktsional. Anal. i Prilozhen. 16 (1982), no. 4, 27--34, 96.

\bibitem{sky2} Sklyanin, E. K.; Some algebraic structures connected with the Yang-Baxter equation. Representations of a quantum algebra. (Russian) Funktsional. Anal. i Prilozhen. 17 (1983), no. 4, 34--48.

\bibitem{smsf} Smith, S. P.; Stafford, J. T.; Regularity of the four-dimensional Sklyanin algebra. Compositio Math. 83 (1992), no. 3, 259--289.

\bibitem{stan} Staniszkis, J.M.; The $4$-dimensional Sklyanin algebra.  J. Algebra  167  (1994),  no. 1, 104--115.

\bibitem{pelap1}  S.R. Tagne Pelap; Poisson (co)homology of polynomial Poisson algebras in dimension four: Sklyanin's case, J. Algebra (2009), doi:10.1016/j.jalgebra.2009.05.024.

\bibitem{pelap2} Tagne Pelap, Serge Rom\'eo; On the Hochschild Homology of Elliptic Sklyanin Algebras. Lett Math Phys (2009) 87:267-281, DOI 10.1007/s11005-009-0307-6


\bibitem{van} Van den Bergh, M.; Noncommutative homology of some three-dimensional quantum spaces. Proceedings of Conference on Algebraic Geometry and Ring Theory in honor of Michael Artin, Part III (Antwerp, 1992). $K$-Theory 8 (1994), no. 3, 213--230.

\bibitem{vanden} Van den Bergh, M.; A relation between Hochschild homology and cohomology for Gorenstein rings, Proc. Amer. Math. Soc., \textbf{126:5}, (1998), 1345--1348.

\bibitem{xu1} Xu, P.; Gerstenhaber algebras and BV-algebras in Poisson geometry. Comm. Math. Phys. 200 (1999), no. 3, 545--560.


\end{thebibliography}
\end{document}